\theoremstyle{plain}
\newtheorem{lemma}{Lemma}
\newtheorem{corollary}{Corollary}
\newtheorem{theorem}{Theorem}
\newtheorem{problem}{Problem}
\newtheorem{definition}{Definition}
\newtheorem{assumption}{Assumption}
\DeclareMathOperator{\tr}{trace}
\DeclareMathOperator{\st}{subject~to~}
\begin{document}

\title{An efficient, variational approximation of the best fitting multi-Bernoulli filter} 

\author{Jason~L.~Williams,~\IEEEmembership{Member,~IEEE}
\thanks{Manuscript received March 19, 2014; revised August 28, 2014; accepted October 28, 2014. The associate editor coordinating the review of this manuscript and approving it for publication was Prof. Stefano Marano.}%
\thanks{The author is with the National Security, Intelligence, Surveillance and Reconnaissance Division, Defence Science and Technology Organisation, Australia, and the School of Electrical and Electronic Engineering, University of Adelaide, Australia (e-mail: jason.williams@dsto.defence.gov.au).} %
\thanks{Colour versions of one or more of the figures in this paper are available online
at \url{http://ieeexplore.ieee.org}. Supplementary downloadable material is also available, including videos of sample Monte Carlo trials illustrating the scenario and the behaviour of the proposed methods.}%
\thanks{Digital Object Identifier 10.1109/TSP.2014.2370946}}

\markboth{Accepted for publication, IEEE Transactions on Signal Processing}{Williams: An efficient, variational approximation of the best fitting multi-Bernoulli filter}

\IEEEpubid{\copyright~2014 Crown}

\maketitle

\begin{abstract}
The joint probabilistic data association (JPDA) filter is a popular tracking methodology for problems involving well-spaced targets, but it is rarely applied in problems with closely-spaced targets due to its complexity in these cases, and due to the well-known phenomenon of coalescence. This paper addresses these difficulties using random finite sets (RFSs) and variational inference, deriving a highly tractable, approximate method for obtaining the multi-Bernoulli distribution that minimises the set Kullback-Leibler (KL) divergence from the true posterior, working within the RFS framework to incorporate uncertainty in target existence. The derivation is interpreted as an application of expectation-maximisation (EM), where the missing data is the correspondence of Bernoulli components (i.e., tracks) under each data association hypothesis. The missing data is shown to play an identical role to the selection of an ordered distribution in the same ordered family in the set JPDA algorithm. Subsequently, a special case of the proposed method is utilised to provide an efficient approximation of the minimum mean optimal sub-pattern assignment estimator. The performance of the proposed methods is demonstrated in challenging scenarios in which up to twenty targets come into close proximity.
\end{abstract}

\begin{IEEEkeywords}
Target tracking, random finite sets, expectation maximization, Kullback-Leibler divergence, optimum sub-pattern assignment, coalescence, variational inference, Bayesian estimation.
\end{IEEEkeywords}

\section{Introduction}
Many tracking problems (e.g., radar-based surveillance) involve the estimation of the number and states (e.g., position, velocity) of targets through unlabelled measurements. This problem is complicated by data association, or the unknown correspondence between measurements and targets. Traditionally, there have been two basic approaches to this. The first is that utilised in methods such as the multiple hypothesis tracker (MHT) \cite{Rei79,Kur90}. In this approach, the likelihood of several feasible measurement-target association hypotheses is evaluated, and the solution provided is the target state conditioned the maximum a posteriori (MAP) hypothesis. This formulation is well-known to provide excellent performance in challenging problems such as those involving closely-spaced targets. The difficulty is its computational tractability, which limits its application in problems with high false alarm rates, or with association ambiguity between many targets.

The alternate traditional approach has been joint probabilistic data association (JPDA) \cite{ForBar83}, which adopts a philosophy that the goal is to estimate the target state, and that data association is a nuisance variable. Consequently, association is addressed by taking a total probability expansion over feasible measurement-target association hypotheses, marginalising over the nuisance variable. JPDA provides an improved computation-performance trade-off for problems in which targets remain well-spaced; however, the method fails in problems in which targets become well-spaced due to a difficulty known as coalescence (e.g., \cite{BloBlo00}). This is a direct result of the unknown measurement-target association: after targets have become closely-spaced, posterior distributions become strongly multimodal, where the different modes correspond to permutations of targets. In other words, coalescence reveals the underlying nature of the problem as an unlabelled set, motivating application of methods from random finite sets (RFSs).

\IEEEpubidadjcol

RFS distributions provide an integrated mathematical framework for addressing estimation problems in which states to be estimated and/or observations form sets \cite{Mah07}. The difference between random sets and random vectors is that realisations of a random set will generally contain different numbers of elements, and the ordering of the elements is meaningless. We will use the lower case notation $x$ to refer to elements of the underlying state space (e.g., position, velocity), and upper case $X=\{x_1,\dots,x_n\}$ to refer to finite sets of state space elements. RFS distributions encode both uncertainty in cardinality and uncertainty in values into a single representation. Uncertainty of this form could be represented via a cardinality distribution $p_f(n)$ and a series of conditional state distributions $f(x_1,\dots,x_n|n)$; the relationship between the RFS distribution $f(X)$ and these components is:
\begin{equation}
f(\{x_1,\dots,x_n\}) = p_f(n)\sum_{\pi\in\Pi_n}f(x_{\pi(1)},\dots,x_{\pi(n)}|n)
\end{equation}
where $\Pi_n$ is the set of all $n$-element permutations (defined below), the sum over which ensures permutation invariance, a necessary property for a set distribution. 

\begin{definition}\label{def:CompletePermutations}
Denote by $\Pi_N$ the set of complete permutation functions on $I_N\triangleq\{1,\dots,N\}$:
\[
\Pi_N = 
\big\{\pi : I_N \rightarrow I_N | i\neq j \Rightarrow \pi(i)\neq\pi(j)\big\}
\]
\end{definition}

Early work in RFSs focused on simple representations such as the probability hypothesis density (PHD) \cite{Mah03} and cardinalised PHD (CPHD) \cite{Mah07b}, which have provided interesting practical and theoretical results (e.g., \cite{GeoWil12,BraMar13}). Recently, improved performance has been demonstrated through parallel derivations of conjugate prior forms for target tracking using unlabelled RFSs \cite{Wil11c,Wil12} and labelled RFSs \cite{VoVo13}. In each case, the form of the exact filter is a linear combination of multi-Bernoulli (MB) distributions. The complexity of exact methods is problematic as the number of terms in the linear combination grows exponentially in the number of targets; this is the problem of data association. The complexity is addressed in \cite{Wil12} by seeking a MB distribution that approximates the posterior. Observing that the linear combination may be viewed as a marginalisation over a latent association variable $a\in{\cal A}$ (as in JPDA, where ${\cal A}$ is the set of all association hypotheses, to be defined later), the methods proposed approximated the probability distribution of data association, $p(a)$. This resulted in two approaches, the first of which was closely related to JPDA and joint integrated PDA (JIPDA) \cite{ForBar83,MusEva04} and hence suffered from coalescence, and the second of which was related to the MeMBer filter \cite{Mah07,VoVo09}. The latter was found to be more robust, but exhibited lower performance when targets were well-spaced. 

\subsection{Best fitting multi-Bernoulli filter}
\label{ss:IntroBFMB}
Working within the RFS framework has several advantages. One of these is the ability to rigorously define measures of distortion caused to an entire multi-object distribution $f(X)$ when approximating it by another multi-object distribution $g(X)$ via the set Kullback-Leibler (KL) divergence, as defined in \cite[p513]{Mah07}:
\begin{equation}\label{eq:RFSKL}
D(f||g) = \int{f(X)\log\frac{f(X)}{g(X)}\delta X}
\end{equation}
where the set integral is \cite[p361]{Mah07}
\ifCLASSOPTIONdraftcls
\begin{equation}\label{eq:SetIntegral}
\int{v(X)\delta X} \triangleq 
v(\emptyset) + \sum_{n=1}^\infty \frac{1}{n!}  \idotsint 
v(\{x_1,\dots,x_n\}) \mathrm{d} x_1 \cdots \mathrm{d} x_n
\end{equation}
\else
\begin{multline}\label{eq:SetIntegral}
\int{v(X)\delta X} \triangleq \\
v(\emptyset) + \sum_{n=1}^\infty \frac{1}{n!}  \idotsint 
v(\{x_1,\dots,x_n\}) \mathrm{d} x_1 \cdots \mathrm{d} x_n
\end{multline}
\fi
The set KL divergence in \eqref{eq:RFSKL} encompasses changes to both the distribution of cardinality, and the target state distribution. As explored further in section \ref{ss:BackKLMin}, minimisation of KL divergence is the standard approach for finding the distribution in a family that best matches a particular exact distribution. For example, the PHD and CPHD filters both calculate the distribution which minimises the set KL divergence within their respective families.

Thus, a compelling alternative to the algorithms in \cite{Wil12} would be to find the MB distribution which minimises the set KL divergence from the exact distribution. The first contribution of this paper is an efficient, approximate method of finding the MB distribution that minimises the set KL divergence. The resulting algorithm is shown to be related to set JPDA (SJPDA) \cite{SveSve11}, but is a RFS-based alternative that accommodates uncertainty in the number of targets. Furthermore, the computational complexity of the proposed method is drastically lower than SJPDA. Specifically, for problems involving $N$ targets, $|{\cal H}|$ single-target association hypotheses, and $|{\cal A}|$ joint association hypotheses (where typically $|{\cal A}|\approx(|{\cal H}|/N)^N$), the proposed method requires solution of a network flow linear program (LP) with $N\times|{\cal H}|$ variables, whereas SJPDA requires iterative solution of $|\cal A|$ $N$-element assignment problems. Since the term \emph{variational inference} is widely used to refer to statistical approximations based on optimisation (e.g., \cite{WaiJor08}), we refer to our method as the variational MB (VMB) filter.

\subsection{Minimum mean optimal sub-pattern assignment estimator}
\label{ss:IntroMMOSPA}
The optimal sub-pattern assignment (OSPA) was introduced in \cite{SchVo08} as a distance metric for the difference between two sets of points; it is defined below.

\begin{definition}\label{def:OSPA}
If $X=\{x_1,\dots,x_n\}$, $Y=\{y_1,\dots,y_m\}$, and $n\geq m$, OSPA is defined as:

\begin{equation}\label{eq:OSPA}
d_{\mathrm{ospa}}(X,Y) \triangleq \Bigg[
\frac{1}{n}\min_{\pi\in\Pi_{n}}\sum_{j=1}^{m}d_c(x_{\pi(j)},y_j)^p + c^p(n-m)
\Bigg]^{\frac{1}{p}}
\end{equation}
where $d_c(x,y)\triangleq\min\{c,d(x,y)\}$, $d(x,y)$ is a distance function on the single target state space, $c>0$ is a real number indicating the cost of a target not having a corresponding estimate (or vice versa), and $\pi$ is a permutation function in the set $\Pi_N$ of definition \ref{def:CompletePermutations}. If $n<m$ then $d_{\mathrm{ospa}}(X,Y) \triangleq d_{\mathrm{ospa}}(Y,X)$ following the definition above (thus reversing $n$ and $m$). 
\end{definition}

Mimimum mean OSPA (MMOSPA) estimation has been studied previously in works such as \cite{GueSve10}, and it was the motivation for the objective function used in SJPDA. However, previous works have not satisfactorily addressed the growth in complexity with the number of targets. The second contribution of this paper is an approximation of the MMOSPA estimator based on the VMB algorithm; we refer to the result as the variational MMOSPA (VMMOSPA) estimator.

\section{Background}
\label{sec:Background}
\subsection{Multi-Bernoulli filters}
This work is based on the form derived in \cite{Wil12}, which studies unlabelled distributions, and makes the following modelling assumptions.
\begin{assumption}
The multiple target state evolves according to the following time dynamics process:
\begin{itemize}
\item Targets arrive at each time according to a non-homogeneous Poisson point process (PPP) with birth intensity $\lambda_\textrm{b}(x)$, independent of existing targets
\item Targets depart according to independent, identically distributed (i.i.d.\xspace) Markovian processes; the survival probability in state $x$ is $P_\textrm{s}(x)$
\item Target motion follows i.i.d.\xspace Markovian processes; the single-target transition probability density function (PDF) is $f_{t|t-1}(x|x')$
\end{itemize}
\label{ass:Dynamics}
\end{assumption}

\begin{assumption}
The multiple target measurement process is as follows:
\begin{itemize}
\item Each measurement is either a false alarm, or a measurement of a single target\footnote{i.e., unresolved or merged measurements are not considered in this initial work.}
\item Each target may give rise to at most one measurement; probability of detection in state $x$ is $P_\textrm{d}(x)$
\item False alarm measurements arrive according to a non-homogeneous PPP with intensity $\lambda_\textrm{fa}(z)$, independent of targets and target-related measurements
\item Each target-derived measurement is independent of all other targets and measurements conditioned on its corresponding target; the single target measurement likelihood is $f(z|x)$
\end{itemize}
\label{ass:Measurement}
\end{assumption}

Under these assumptions, \cite{Wil12} proves that the following form is a conjugate prior, i.e., it is preserved by prediction and update:
\begin{equation}
f(X) = \sum_{Y\subseteq X}f_\textrm{ppp}(Y) f_\textrm{mbm}(X-Y) 
\label{eq:FullDensity}
\end{equation}
where $Y\subseteq X$ denotes the sum over all sets $Y$ which are a subset of the finite set $X$, and $f_\textrm{ppp}(X)$ is a PPP representing unknown targets,\footnote{The model does not assume that $P_\textrm{d}=1$ for targets just born. Accordingly, a proportion of the targets hypothesised to have arrived by the birth model will go undetected. These targets, which have \emph{never} been detected, are referred to as \emph{unknown} targets. See \cite{Wil12} for a proof of the result and further discussion.} with intensity $\lambda_\textrm{u}(x)$:
\begin{equation}
f_\textrm{ppp}(X) = \exp\left\{-\int\lambda_\textrm{u}(x)\mathrm{d} x\right\}\cdot\prod_{x\in X}\lambda_\textrm{u}(x) 
\label{eq:UndetectedTargets}
\end{equation}
and $f_\textrm{mbm}(X)$ is a MB mixture, i.e., a linear combination of multi-Bernoulli distributions, of the form
\begin{equation}\label{eq:MBM}
f_\textrm{mbm}(X) = \sum_{a=(h_1,\dots,h_N)\in{\cal A}} w_a \sum_{\biguplus_{i=1}^N X_i=X}\prod_{i=1}^N f_{h_i}(X_i)
\end{equation}
where the notation $\biguplus_{i=1}^N X_i=X$ denotes that the sum is over all disjoint subsets $X_1,\dots,X_N$ whose union is $X$. As derived in \cite{Wil12}, the terms in the sum $a\in{\cal A}$ correspond to different choices of data association, i.e., different choices of which groups of past measurements correspond to the same targets, referred to as \emph{global association hypotheses}. The coefficient $w_a$ is the probability of global hypothesis $a$; consequently, $w_a\geq 0$ and $\sum_{a\in{\cal A}}w_a=1$. For notational convenience, we consider the single-target hypotheses $f_{h_i}(X_i)$ for all Bernoulli components to be indexed through the set ${\cal H}$, i.e., $h_i\in{\cal H}\;\forall\;i$.\footnote{Nevertheless it is generally the case that the elements $h_i$ used by different Bernoulli components will be disjoint across all global hypotheses.} The hypotheses $f_h(X_i)$ are of the form:
\begin{equation}\label{eq:Bernoulli}
f_h(X_i) = \begin{cases}
1-r_h, & X_i=\emptyset \\
r_h f_h(x_i), & X_i=\{x_i\} \\
0, & |X_i| > 1
\end{cases}
\end{equation}
where $r_h$ is the probability of existence under hypothesis $h$, and $f_h(x_i)$ is the PDF of target state under hypothesis $h$. In our implementations, we further assume that $f_h(x_i)=\mathcal{N}\{x_i;\mu_h,\Sigma_h\}$, although the basic derivation can handle other representations such as particle filters. As shorthand, when we refer to a global association hypothesis $a$, we assume that $a=(h_1,\dots,h_N)$, and hence we also refer to the $i$-th constituent single-target event as $h_i$.

Expressions and algorithms for predicting and updating the PPP component and the MB mixture are provided in \cite{Wil12}. The greatest difficulty in implementing the filter is the number of components in the MB mixture. In this work, we focus on the problem of simplifying the MB mixture into a single MB distribution. Therefore, for the purpose of this paper, we disregard the PPP component and refer to the MB mixture as $f(X)$.

\subsection{Minimisation of Kullback-Leibler divergence}
\label{ss:BackKLMin}
The first goal of this work is to obtain the MB distribution $g(X)$ which best matches the full distribution $f(X)$ according to some measure. Because of its links to maximum likelihood (ML) estimation, a natural choice for the distortion measure in this problem is KL divergence:
\begin{equation}\label{eq:KL_ML}
\operatornamewithlimits{argmin}_g \int{f(x)\log\frac{f(x)}{g(x)}\mathrm{d} x} = \operatornamewithlimits{argmax}_g \int{f(x)\log g(x)\mathrm{d} x}
\end{equation}
The following well-known theorems show that minimisation of KL divergence yields intuitive outcomes in common cases.

\begin{theorem}\label{th:ExpFamilyKL}\cite[Theorem 8.6, pg 278]{KolFri09}
Let ${\cal E}$ be the exponential family corresponding to sufficient statistics $\tau(x)$, i.e., distributions of the form $g_\theta(x) \propto \exp(\langle\theta,\tau(x)\rangle)$, where $\theta$ is the vector of canonical parameters, which are constrained to values for which $g_\theta(x)$ is normalisable (i.e., its integral is finite). Then the distribution $g_\theta(\cdot)\in{\cal E}$ which minimises the KL divergence $D(f||g)$ from $f(\cdot)$ is the one which matches the expected value of the sufficient statistics $\tau(x)$, i.e., $g_\theta(\cdot)$ such that $E_{g_\theta}[\tau(x)]=E_{f}[\tau(x)]$.
\end{theorem}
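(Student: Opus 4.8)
The plan is to turn the minimisation over the whole family ${\cal E}$ into a finite-dimensional, concave optimisation over the canonical parameter $\theta$, and then to read off the moment-matching condition as the stationarity equation. First I would write each member of ${\cal E}$ in normalised form $g_\theta(x) = \exp\{\langle\theta,\tau(x)\rangle - A(\theta)\}$, where $A(\theta) \triangleq \log\int\exp\{\langle\theta,\tau(x)\rangle\}\,\mathrm{d} x$ is the log-partition function, finite exactly on the admissible set of $\theta$. Invoking the identity \eqref{eq:KL_ML}, minimising $D(f\|g_\theta)$ over $\theta$ is equivalent to maximising
\[
L(\theta) \triangleq \int f(x)\log g_\theta(x)\,\mathrm{d} x = \langle\theta,E_f[\tau(x)]\rangle - A(\theta),
\]
which depends on $f$ only through the moment vector $E_f[\tau(x)]$.

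Next I would establish the two analytic ingredients. The first is that $A$ is convex (strictly convex when the components of $\tau$ are affinely independent): this follows from H\"older's inequality applied to $\exp\{\langle\theta,\tau(x)\rangle\}$, or equivalently from the Hessian of $A$ being the covariance of $\tau$ under $g_\theta$. Hence $L$ is concave and any stationary point is a global maximiser. The second is the cumulant identity $\nabla A(\theta) = E_{g_\theta}[\tau(x)]$, obtained by differentiating $A$ under the integral sign. Setting $\nabla L(\theta) = E_f[\tau(x)] - \nabla A(\theta) = 0$ then gives precisely $E_{g_\theta}[\tau(x)] = E_f[\tau(x)]$, so the KL-minimising member of ${\cal E}$ is the moment-matched one.

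The step I expect to be the main obstacle is justifying the differentiation under the integral sign needed for $\nabla A(\theta) = E_{g_\theta}[\tau(x)]$ (and the corresponding interchange in $\nabla L$), which requires a dominated-convergence argument valid on the interior of the natural parameter space. A secondary technical point is existence of a minimiser: one must argue that the equation $\nabla A(\theta) = E_f[\tau(x)]$ is solvable, i.e., that $E_f[\tau(x)]$ lies in the interior of the image of $\nabla A$ (the interior of the convex hull of the support of $\tau$); when it does not, the infimum of $D(f\|g_\theta)$ is only approached in a limit. For the purposes of this paper it suffices to cite \cite[Theorem 8.6]{KolFri09}, but the argument sketched above is the self-contained route.
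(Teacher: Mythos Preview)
Your argument is correct and is the standard route to this result. Note, however, that the paper does not give its own proof of this theorem: it is stated as a known background fact with an explicit citation to \cite[Theorem~8.6, p.~278]{KolFri09}, and no derivation appears in the text or appendices. So there is no ``paper's proof'' to compare against; you have simply supplied the self-contained argument that the cited reference contains (and indeed you note yourself that citing \cite{KolFri09} suffices for the paper's purposes). Your sketch---rewriting $D(f\|g_\theta)$ as $-\langle\theta,E_f[\tau]\rangle + A(\theta)$ up to a constant, invoking convexity of $A$, and reading off moment matching from $\nabla A(\theta)=E_{g_\theta}[\tau]$---is exactly the textbook proof, and your caveats about differentiation under the integral and about existence when $E_f[\tau]$ lies on the boundary of the mean-parameter set are the right technical qualifications.
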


Particular cases of this theorem include the best fitting Gaussian distribution (choosing the Gaussian distribution $g(x)$ matching the mean and covariance of $f(\cdot)$) and multinomial distribution (choosing the discrete distribution to match the observed frequency of occurrence of each outcome). Another common example is selection of the best fitting distribution which has a fully factored form, which we state in theorem \ref{th:KLIndependence}.

\begin{theorem}\label{th:KLIndependence}\cite[Prop 8.3, pg 277]{KolFri09}
Given a distribution $f(x_1,\dots,x_n)$, the distribution with independent components $g(x)=\prod_{i=1}^n g_i(x_i)$ which minimises the KL divergence $D(f||g)$ is found by setting $g_i(x_i)$ to the marginal distribution of $x_i$, i.e., $g_i(x_i)=\int{f(x_1,\dots,x_n)\mathrm{d} x_{\backslash i}}$ where $\mathrm{d} x_{\backslash i}$ denotes integration with respect to all elements of $x$ other than $x_i$.
\end{theorem}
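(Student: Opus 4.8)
The plan is to expand the divergence, discard the term that does not involve $g$, and reduce what remains to $n$ decoupled scalar optimisations. Since $D(f\|g)=\int f(x)\log f(x)\,\mathrm{d}x - \int f(x)\log g(x)\,\mathrm{d}x$ and the first term does not depend on $g$, minimising $D(f\|g)$ over the factored family is equivalent to maximising $\int f(x)\log g(x)\,\mathrm{d}x$. Substituting $g(x)=\prod_{i=1}^n g_i(x_i)$ turns this into $\sum_{i=1}^n\int f(x)\log g_i(x_i)\,\mathrm{d}x$.

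Next I would integrate out the irrelevant variables in each summand. Because $\log g_i(x_i)$ depends on $x_i$ alone, Fubini's theorem gives $\int f(x)\log g_i(x_i)\,\mathrm{d}x = \int\!\big(\!\int f(x)\,\mathrm{d}x_{\backslash i}\big)\log g_i(x_i)\,\mathrm{d}x_i = \int f_i(x_i)\log g_i(x_i)\,\mathrm{d}x_i$, where $f_i$ denotes the $i$-th marginal of $f$. The objective thus separates as $\sum_{i=1}^n\int f_i(x_i)\log g_i(x_i)\,\mathrm{d}x_i$, and the normalisation constraints ($g_i\ge 0$, $\int g_i(x_i)\,\mathrm{d}x_i=1$) also decouple across $i$, so it suffices to maximise $\int f_i(x_i)\log g_i(x_i)\,\mathrm{d}x_i$ over densities $g_i$ for each $i$ separately.

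For the per-component problem I would invoke Gibbs' inequality, i.e., nonnegativity of the scalar KL divergence: $\int f_i(x_i)\log\frac{f_i(x_i)}{g_i(x_i)}\,\mathrm{d}x_i = D(f_i\|g_i)\ge 0$, with equality if and only if $g_i=f_i$ almost everywhere. Rearranging gives $\int f_i\log g_i\le\int f_i\log f_i$, so each summand is maximised uniquely at $g_i=f_i$; reassembling the factors yields the unique minimiser $g(x)=\prod_{i=1}^n f_i(x_i)$, as claimed. (An alternative is a calculus-of-variations argument with a Lagrange multiplier for each normalisation constraint, but the Gibbs-inequality route is shorter and supplies uniqueness for free.)

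This result is classical, so I do not anticipate a genuine obstacle; the only points needing a word of justification are the Fubini interchange (legitimate because $f$ is a nonnegative, integrable density), the degenerate situation in which $D(f\|g)=+\infty$ for every factored $g$ (in which case the claim is vacuous), and the observation that the equality condition in Gibbs' inequality is what upgrades the statement from existence to uniqueness of the minimiser. I would dispatch all three in a single remark rather than dwell on them.
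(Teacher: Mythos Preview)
Your proof is correct and is the standard textbook argument. The paper does not supply its own proof of this theorem; it simply cites it from Koller and Friedman, so there is no in-paper derivation to compare against, and your decomposition via Fubini followed by Gibbs' inequality is exactly what one expects the cited reference to contain.
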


Minimisation of KL divergence also motivates standard approaches in RFSs such as the Probability Hypothesis Density (PHD) and Cardinalised PHD (CPHD) filters, as the following theorems show.

\begin{theorem}\cite[Theorem 4]{Mah03}
The PPP distribution $g(X)=\exp\{-\int\lambda(x)\mathrm{d} x\}\cdot\prod_{x\in X}\lambda(x)$ which minimises the set KL divergence $D(f||g)$ is the one with $\lambda(x)=D_f(x)$, where $D_f(x)$ is the PHD (i.e., first moment) of $f(X)$.
\end{theorem}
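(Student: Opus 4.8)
The plan is to turn the minimisation of the set KL divergence into an ordinary pointwise optimisation over the intensity function $\lambda(\cdot)$. First I would write
\begin{equation*}
D(f\|g) = \int f(X)\log f(X)\,\delta X - \int f(X)\log g(X)\,\delta X,
\end{equation*}
and note that the first term is independent of $g$, so within the PPP family it suffices to maximise $L(\lambda)\triangleq\int f(X)\log g(X)\,\delta X$. Substituting the PPP form of $g$ gives $\log g(X) = -\bar\lambda + \sum_{x\in X}\log\lambda(x)$ with $\bar\lambda\triangleq\int\lambda(x)\,\mathrm{d} x$, so that
\begin{equation*}
L(\lambda) = -\bar\lambda\int f(X)\,\delta X + \int f(X)\sum_{x\in X}\log\lambda(x)\,\delta X.
\end{equation*}

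Next I would simplify the two terms. Since $f$ is a probability density, $\int f(X)\,\delta X = 1$. For the second term, the key ingredient is the defining property of the first moment: for any test function $h$,
\begin{equation*}
\int f(X)\sum_{x\in X}h(x)\,\delta X = \int D_f(x)\,h(x)\,\mathrm{d} x,
\end{equation*}
which follows by expanding the set integral \eqref{eq:SetIntegral} and using the permutation symmetry of $f$. Taking $h=\log\lambda$ then collapses the objective to
\begin{equation*}
L(\lambda) = \int\big[D_f(x)\log\lambda(x) - \lambda(x)\big]\,\mathrm{d} x.
\end{equation*}

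Finally, this functional decouples over $x$, so I would maximise the integrand pointwise. By the elementary inequality $\log t \le t-1$ (equivalently concavity of $\log$), for each $x$ with $D_f(x)>0$ we have $D_f(x)\log\lambda(x) - \lambda(x) \le D_f(x)\log D_f(x) - D_f(x)$ with equality iff $\lambda(x)=D_f(x)$, while where $D_f(x)=0$ the integrand is $-\lambda(x)$, maximised at $\lambda(x)=0$. Hence $\lambda = D_f$ is the unique maximiser of $L$, equivalently the unique minimiser of $D(f\|g)$ within the PPP family, which is the claim.

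The derivation is essentially routine; the only genuine ingredient is the moment identity above, which is almost the definition of the PHD $D_f$. The one point that deserves a line of care is admissibility: one should check that the candidate $g$ is a bona fide PPP, i.e.\ that $\int D_f(x)\,\mathrm{d} x = E_f[|X|] < \infty$, which holds under the standing assumption that $f$ has finite expected cardinality. It is also worth remarking that this result is the set-valued analogue of Theorem~\ref{th:ExpFamilyKL}: the PPP family is an exponential family over the space of finite sets, and matching expected sufficient statistics amounts to matching the first moment $D_f$.
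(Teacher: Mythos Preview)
Your argument is correct. Note, however, that the paper does not actually supply its own proof of this statement: it is quoted as \cite[Theorem~4]{Mah03} and used as background. The closest thing to a comparison point is the paper's proof of the CPHD analogue (Theorem~\ref{th:CPHDMinKL}) in appendix~\ref{ss:CPHDMinKL}, which follows the same overall strategy you do---reduce $D(f\|g)$ to maximising $\int f(X)\log g(X)\,\delta X$, substitute the parametric form of $g$, and collapse $\int f(X)\sum_{x\in X}\log g(x)\,\delta X$ to $\int D_f(x)\log g(x)\,\mathrm{d} x$ via the first-moment identity. The only methodological difference is in the final optimisation step: the paper handles the normalisation constraint on $g(x)$ with a Lagrange multiplier, whereas you exploit that the PPP intensity is unconstrained and optimise the integrand pointwise using $\log t\le t-1$. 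Your route is slightly cleaner here because there is no constraint to enforce, and it makes the uniqueness of the minimiser explicit; the Lagrangian approach generalises more readily to the constrained CPHD case. Your closing remark linking the result to Theorem~\ref{th:ExpFamilyKL} (PPP as an exponential family with sufficient statistic the empirical measure) is apt and not made explicit in the paper.
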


Thus the PHD filter, which approximates the posterior distribution $f(X)$ via its first moment $D_f(x)$, was explained in \cite{Mah03} as finding the PPP $g(X)$ with the smallest KL divergence from $f(X)$. The following theorem shows that the CPHD filter, which approximates the posterior as an i.i.d.\xspace cluster process matching the cardinality distribution and setting the spatial distribution to a normalised version of the PHD, can similarly be shown to minimise the set KL divergence within the family of i.i.d.\xspace cluster processes.

\begin{theorem}\label{th:CPHDMinKL}
Given an exact posterior distribution $f(X)$, the i.i.d.\xspace cluster process $g(X) = p_g(|X|)\cdot\prod_{x\in X}g(x)$ parameterised by cardinality distribution $p_g(n)$ and spatial PDF $g(x)$ which minimises the set KL divergence $D(f||g)$ is the distribution which sets $p_g(n)=p_f(n)$ where $p_f(n)$ is the cardinality distribution of $f(X)$, and $g(x)=D_f(x)/\int{D_f(x)\mathrm{d} x}$.
\end{theorem}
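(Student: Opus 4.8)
The plan is to reduce the set KL divergence to a constant plus two ordinary, non-negative KL divergences — one over the cardinality distribution and one over the spatial density — whose minimisers can then be read off immediately.

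First I would write $D(f||g) = \int f(X)\log f(X)\,\delta X - \int f(X)\log g(X)\,\delta X$, observing that only the second term depends on $g$. For the i.i.d.\ cluster form, $\log g(\{x_1,\dots,x_n\}) = \log p_g(n) + \sum_{i=1}^n \log g(x_i)$, so substituting into the set integral \eqref{eq:SetIntegral} and invoking the permutation invariance of $f$ (which makes each of the $n$ terms $\idotsint f(\{x_1,\dots,x_n\})\log g(x_i)\,\mathrm{d} x_1\cdots\mathrm{d} x_n$ identical) gives
\[
\int f(X)\log g(X)\,\delta X = \sum_{n=0}^{\infty} p_f(n)\log p_g(n) + \int D_f(x)\log g(x)\,\mathrm{d} x,
\]
where $p_f(n)$ is the cardinality distribution of $f$ and $D_f(x)$ is its first moment. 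The crucial step is recognising that the combinatorial factor $n/n! = 1/(n-1)!$ multiplying $\idotsint f(\{x,x_2,\dots,x_n\})\,\mathrm{d} x_2\cdots\mathrm{d} x_n$, summed over $n\ge 1$, is exactly the definition of the PHD $D_f(x)$.

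Next, writing $N_f \triangleq \int D_f(x)\,\mathrm{d} x$ for the expected number of targets, I would add and subtract $\sum_n p_f(n)\log p_f(n)$ and $\int D_f(x)\log[D_f(x)/N_f]\,\mathrm{d} x$ to obtain
\[
D(f||g) = C(f) + \sum_{n=0}^{\infty} p_f(n)\log\frac{p_f(n)}{p_g(n)} + N_f\int \frac{D_f(x)}{N_f}\log\frac{D_f(x)/N_f}{g(x)}\,\mathrm{d} x,
\]
with $C(f)$ independent of $g$. Both remaining terms are non-negative by Gibbs' inequality — the first is a discrete KL divergence between the valid pmfs $p_f$ and $p_g$, the second is $N_f$ times a KL divergence between the valid densities $D_f/N_f$ and $g$ — and each vanishes iff its two arguments agree. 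Hence the unique minimiser is $p_g = p_f$ and $g = D_f/N_f$, which is the claim.

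I expect the main obstacle to be the set-integral bookkeeping: justifying the interchange of the infinite sum with the integrals (e.g.\ via dominated convergence, using $D(f||g)<\infty$) and, more delicately, tracking the $1/n!$ and $n$ factors so that the spatial contribution collapses cleanly to $\int D_f(x)\log g(x)\,\mathrm{d} x$ rather than something permutation-dependent. The degenerate case $N_f = 0$, in which $f$ places all its mass on $\emptyset$ and $g(x)$ is irrelevant, should be dispatched as a trivial aside.
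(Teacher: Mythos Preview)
Your proposal is correct and arrives at the same key decomposition as the paper, namely
\[
\int f(X)\log g(X)\,\delta X = \sum_n p_f(n)\log p_g(n) + \int D_f(x)\log g(x)\,\mathrm{d} x.
\]
The only difference is in how the optimisation is finished: the paper takes gradients and applies Lagrange multipliers to enforce $\sum_n p_g(n)=1$ and $\int g(x)\,\mathrm{d} x=1$, whereas you add and subtract terms to recast the objective as a constant plus two bona fide KL divergences and then invoke Gibbs' inequality. Your route is slightly cleaner in that it delivers the global minimiser and its uniqueness directly, without any calculus or second-order checking; the paper's route is more mechanical but equally valid. Your remarks on the $n/n!$ bookkeeping and the degenerate case $N_f=0$ are sound and go a little beyond what the paper spells out.
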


The proof of the theorem is in appendix \ref{ss:CPHDMinKL}. It can also be shown that the posterior approximation utilised in JPDA minimises the KL divergence from the true posterior among the family of Gaussian approximations with independent targets (this is a direct consequence of theorem \ref{th:ExpFamilyKL} with an appropriate choice of sufficient statistics). The key in this case is that the distribution is labelled (implicitly, since the joint state of all targets is considered to be a vector, rather than an unlabelled set). In contrast, the uncertain mapping between the elements of the set $X$ and the MB components of the approximating distribution $g(X)$ in the unlabelled case prevents application of the standard results, and leaves room for development of new methods.

\subsection{Expectation-maximisation (EM)}
\label{ss:BackEM}
Expectation-maximisation (EM) \cite{DemLai77} provides a mechanism for performing ML inference on distributions where the ML process would be easy if additional data was included alongside the observations. Restating in terms that match the present context, the problem solved by EM is determination of the parameters of $g_\theta(x,y)$ to maximise the log likelihood:
\begin{equation}\label{eq:EMLikelihood}
L(g_\theta) = \int{ f(x) \log \left[\sum_y g_\theta(x,y)\right] \mathrm{d} x }
\end{equation}
The variable $y$ is an \emph{unobserved} or \emph{latent} variable, or \emph{missing data}: it is included because $g_\theta(x,y)$ has a tractable (e.g., exponential family) form whereas $\sum_y g_\theta(x,y)$ does not. Most commonly, $f(x)$ consists of empirical observations;\footnote{e.g., if observations are $\{x_1,\dots,x_n\}$, then $f(x)=\frac{1}{n}\sum_{i=1}^n \delta(x-x_i)$.} in the present work $f(x)$ is the exact posterior that we wish to approximate.

The EM process alternates between the following steps:
\begin{itemize}
\item E-step: Calculate the expectation of the missing data distribution $q(y|x) = g_\theta(x,y)/\sum_y g_\theta(x,y)$.
\item M-step: Calculate the parameters of $g_\theta(x,y)$ which maximise the completed log likelihood $\sum_y \int{ q(y|x) f(x) \log g_\theta(x,y)\mathrm{d} x}$.
\end{itemize}

In \cite{NeaHin98}, it was shown that EM can be viewed as a coordinate descent of an upper bound to the negative log likelihood, alternating between minimising with respect to the model parameters and the missing data distribution. The proof can be stated simply as:
\ifCLASSOPTIONdraftcls
\begin{align}
-L(\theta) &= -\int{ f(x) \log \left[\sum_y g_\theta(x,y)\right] \mathrm{d} x } \displaybreak[0]\\
&= \int{ \left(\sum_y q(y|x)\right) f(x) \log \frac{\sum_y q(y|x)}{\sum_y g_\theta(x,y)} \mathrm{d} x} \displaybreak[0]\\
&\leq \sum_y\int{ q(y|x)f(x) \log \frac{q(y|x)}{g_\theta(x,y)} \mathrm{d} x } \displaybreak[0]\\
&= \int{f(x) \sum_y q(y|x)\log q(y|x)\mathrm{d} x} 
- \sum_y\int{q(y|x)f(x)\log g_\theta(x,y)\mathrm{d} x} \displaybreak[0]\\
&\triangleq F(q,g_\theta)
\end{align}
\else
\begin{align}
-L(\theta) &= -\int{ f(x) \log \left[\sum_y g_\theta(x,y)\right] \mathrm{d} x } \displaybreak[0]\\
&= \int{ \left(\sum_y q(y|x)\right) f(x) \log \frac{\sum_y q(y|x)}{\sum_y g_\theta(x,y)} \mathrm{d} x} \displaybreak[0]\\
&\leq \sum_y\int{ q(y|x)f(x) \log \frac{q(y|x)}{g_\theta(x,y)} \mathrm{d} x } \displaybreak[0]\\
&= \int{f(x) \sum_y q(y|x)\log q(y|x)\mathrm{d} x} \notag\\
&\qquad - \sum_y\int{q(y|x)f(x)\log g_\theta(x,y)\mathrm{d} x} \displaybreak[0]\\
&\triangleq F(q,g_\theta)
\end{align}
\fi
where the upper bound is due to the log-sum inequality \cite[p29]{CovTho91}. Assuming that the energy functional $F(q,g_\theta)$ is continuously differentiable (e.g., assuming an appropriate form of $g_\theta$), coordinate descent (i.e., alternating between optimising with respect to $q(y|x)$, and with respect to the parameters of $g_\theta(x,y)$) is guaranteed to converge to a local minimum \cite[prop 2.7.1]{Ber98}. Further, as shown in \cite[thm 2]{NeaHin98}, upon convergence, we have $q(y|x)=g_\theta(x,y)/\sum_y g_\theta(x,y)$, in which case the log-sum inequality is tight, and the solution reached is a local optimum of $L(\theta)$.

\subsection{Set joint probabilistic data association}
\label{ss:KLSJPDA}
The Bernoulli distribution \eqref{eq:Bernoulli} is practically similar to the model used in integrated probabilistic data association (IPDA) \cite{MusEva94}. Likewise, the TOMB algorithm proposed in \cite{Wil12} is practically similar to JIPDA \cite{MusEva04}, which is in turn an extension of JPDA which admits uncertainty in target existence. The method which we propose in this work is related to KL set JPDA (KLSJPDA) and set JPDA (SJPDA) \cite{SveSve11}. KLSJPDA seeks to minimise the KL divergence between a modification  $\tilde{f}(\boldsymbol{X})$ of the posterior distribution  $f(\boldsymbol{X})$ (where $\boldsymbol{X}=(x_1,\dots,x_n)$ is the vector of the joint state of all targets, and the number of targets $n$ is known) and a Gaussian approximation:
\begin{equation}\label{eq:KLSJPDA}
\int{ \tilde{f}(\boldsymbol{X}) \log \frac{\tilde{f}(\boldsymbol{X})}{\mathcal{N}\{\boldsymbol{X};\boldsymbol{\mu},\boldsymbol{\Sigma}\}}d \boldsymbol{X}}
\end{equation}
The objective \eqref{eq:KLSJPDA} is minimised with respect to the modified distribution $\tilde{f}(\boldsymbol{X})$ (permuting elements in such a manner that the KL divergence is reduced but the symmetrised distribution is the same as the symmetrisation of $f(\boldsymbol{X})$), and $(\boldsymbol{\mu},\boldsymbol{\Sigma})$ (fitting a Gaussian to the modified distribution). In appendix \ref{sec:SJPDAEquiv}, we show that this is equivalent to finding the symmetrised Gaussian distribution that minimises the KL divergence from the true symmetrised distribution, i.e., the same as \eqref{eq:RFSKL} with fixed cardinality. 

SJPDA replaces this objective with the trace of the covariance $\boldsymbol{\Sigma}$, which is related to the mean OSPA metric. The modification of $f(\boldsymbol{X})$ is also limited such that a fixed permutation is used under each global association hypothesis. In comparison to SJPDA/KLSJPDA, the method proposed in the section \ref{sec:BFMB} has the following advantages:
\begin{itemize}
\item The motivation and derivation is based on minimising the set KL divergence between a fixed distribution $f(X)$ and an approximation $g(X)$, in a manner similar to well-accepted methods for finding best fitting distributions (e.g., section \ref{ss:BackKLMin})
\item The formulation naturally handles cases in which the number of targets present is uncertain, whereas KLSJPDA/SJPDA assume that the number of targets is known
\item The method yields a relaxed optimisation involving a single variable for each single-target hypothesis and each track (i.e., Bernoulli component) that can be solved efficiently in low order polynomial time, whereas SJPDA involves a distribution of permutations for every global association hypothesis, resulting in time that is exponential in the number of targets
\end{itemize}
The final difference is especially significant, as it essentially means that SJPDA (and even more so KLSJPDA) can only be applied to very small problems, whereas the method proposed scales well with problem size.

\vspace{12pt}
\section{Variational multi-Bernoulli filter}
\label{sec:BFMB}
Having observed in section \ref{ss:BackKLMin} that the standard approach for finding the distribution within a family that best matches a given distribution is to minimise the KL divergence, we set about using this process to obtain the MB distribution $g(X)$ which best matches the MBM distribution $f(X)$. We refer to the resulting algorithm as the best fitting MB (BFMB) filter, and the subsequent approximation (described in sections \ref{ss:VMB} to \ref{ss:Efficient}) as the variational MB (VMB) filter.

\begin{problem}\label{prob:MB}
Find the MB distribution $g(X)$ that minimises the KL divergence 
\ifCLASSOPTIONdraftcls
\begin{equation}\label{eq:ProbKLDiv}
\operatornamewithlimits{argmin}_{[g_j]} \int{f(X)\log\frac{f(X)}{g(X)}\delta X} = 
\operatornamewithlimits{argmax}_{[g_j]} \int{f(X)\log g(X)\delta X}
\end{equation}
\else
\begin{multline}\label{eq:ProbKLDiv}
\operatornamewithlimits{argmin}_{[g_j]} \int{f(X)\log\frac{f(X)}{g(X)}\delta X} = \\
\operatornamewithlimits{argmax}_{[g_j]} \int{f(X)\log g(X)\delta X}
\end{multline}
\fi
where $g(X)$ is MB:
\begin{equation}\label{eq:MB}
g(X) = \sum_{\biguplus_{j=1}^N X_j=X}\prod_{j=1}^N g_j(X_j)
\end{equation}
and the components $g_j(X_j)$ are similar in form to \eqref{eq:Bernoulli}. 
\end{problem}
As in \eqref{eq:KL_ML}, the right hand side (RHS) of \eqref{eq:ProbKLDiv} is obtained by separating the log of the quotient into the difference of the logs, and observing that the first term is constant with respect to (WRT) the variables of minimisation.

The two major difficulties encountered in performing this optimisation are in the complexity of the set integral \eqref{eq:SetIntegral} used in \eqref{eq:ProbKLDiv}, and in the sum which appears in \eqref{eq:MB}. Comparing the sum in \eqref{eq:MB} to \eqref{eq:EMLikelihood}, we interpret it as being the result of \emph{missing data}, and address it using EM (in section \ref{ss:VMB}). First, we address the set integral in \eqref{eq:ProbKLDiv}. 

As described in \eqref{eq:SetIntegral}, the integral is expanded into a sum over cardinalities, and each Bernoulli component appears under each cardinality. The following theorem shows that the form can be simplified into a nested series of Bernoulli integrals, and that the summation in \eqref{eq:MB} which is over assignments of the elements of the variable cardinality set $X$ to Bernoulli components $[g_j]$ can be simplified to a sum of assignments from the $N$ Bernoulli components in the exact distribution $f(X)$ to the $N$ Bernoulli components in the simplified MB distribution $g(X)$.

\begin{theorem}\label{th:BFMBEquivExact}
The solution of the optimisation
\ifCLASSOPTIONdraftcls
\begin{equation}\label{eq:MaxLikGTilde}
\operatornamewithlimits{argmax}_{[g_j]} \sum_{a\in{\cal A}}w_a \idotsint{ \prod_{i=1}^N f_{h_i}(X_i) \cdot} 
{\cdot \log\sum_{\pi\in\Pi_N}\prod_{i=1}^N g_{\pi(i)}(X_i) \delta X_1\cdots \delta X_N} 
\end{equation}
\else
\begin{multline}\label{eq:MaxLikGTilde}
\operatornamewithlimits{argmax}_{[g_j]} \sum_{a\in{\cal A}}w_a \idotsint{ \prod_{i=1}^N f_{h_i}(X_i) \cdot} \\
{\cdot \log\sum_{\pi\in\Pi_N}\prod_{i=1}^N g_{\pi(i)}(X_i) \delta X_1\cdots \delta X_N} 
\end{multline}
\fi
%
%
%
is the same as the solution of problem \ref{prob:MB}.
\end{theorem}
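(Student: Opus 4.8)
The plan is to rewrite the objective $\int f(X)\log g(X)\,\delta X$ appearing on the right-hand side of \eqref{eq:ProbKLDiv} as the objective in \eqref{eq:MaxLikGTilde} plus a term that is constant with respect to the optimisation variables $[g_j]$; the two argmax problems then coincide. It suffices to treat $g$ with $D(f\|g)<\infty$, since any other $g$ makes both objectives equal to $-\infty$ and so cannot affect the argmax, and all integrals below are then well defined.

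First, substitute the mixture form \eqref{eq:MBM}--\eqref{eq:Bernoulli} for $f(X)=f_\textrm{mbm}(X)$ and move the finite sum over global hypotheses $a\in\mathcal A$ outside the set integral, obtaining $\sum_{a\in\mathcal A}w_a\int\big[\sum_{\biguplus_{i=1}^N X_i=X}\prod_{i=1}^N f_{h_i}(X_i)\big]\log g(X)\,\delta X$. The bracketed expression is a multi-Bernoulli density in $X$, and I would invoke the standard fact that the set integral of a product over disjoint partitions ``unrolls'' into a nested sequence of Bernoulli integrals, i.e.\ $\int\big[\sum_{\biguplus_i X_i=X}\prod_i f_{h_i}(X_i)\big]v(X)\,\delta X=\idotsint\prod_i f_{h_i}(X_i)\,v(X_1\cup\cdots\cup X_N)\,\delta X_1\cdots\delta X_N$. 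This follows directly from \eqref{eq:SetIntegral}: the $\tfrac1{n!}$ attached to an $n$-point realisation is exactly cancelled by the $n!$ orderings of the present components, and configurations with coincident points contribute nothing because the $f_{h_i}$ have densities. With $v(X)=\log g(X)$ this turns the objective into $\sum_{a\in\mathcal A}w_a\idotsint\prod_{i=1}^N f_{h_i}(X_i)\,\log g(X_1\cup\cdots\cup X_N)\,\delta X_1\cdots\delta X_N$, where each $\delta X_i$ effectively ranges only over $X_i=\emptyset$ and $X_i=\{x_i\}$ since $f_{h_i}(X_i)=0$ for $|X_i|>1$.

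Next, I would prove the combinatorial lemma that for pairwise-disjoint $X_1,\dots,X_N$ with $|X_i|\le 1$ and exactly $n$ of them nonempty, $\sum_{\pi\in\Pi_N}\prod_{i=1}^N g_{\pi(i)}(X_i)=(N-n)!\,g(X_1\cup\cdots\cup X_N)$. Expanding $g(X_1\cup\cdots\cup X_N)$ via \eqref{eq:MB}, a nonzero term requires each part to hold at most one element, so the admissible partitions correspond to injections of the $n$ present elements into $\{1,\dots,N\}$; conversely every $\pi\in\Pi_N$ restricts to such an injection on the present indices, and for each fixed restriction its $(N-n)!$ completions all contribute the same product over the unused components of $g$. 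Grouping the permutations by their restriction gives the factor $(N-n)!$.

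Finally, writing $n=|X_1\cup\cdots\cup X_N|$, the lemma gives $\log g(X_1\cup\cdots\cup X_N)=\log\sum_{\pi\in\Pi_N}\prod_{i=1}^N g_{\pi(i)}(X_i)-\log(N-n)!$, and substituting this into the objective from the first step splits it into the objective of \eqref{eq:MaxLikGTilde} minus $\sum_{a\in\mathcal A}w_a\idotsint\prod_i f_{h_i}(X_i)\log(N-n)!\,\delta X_1\cdots\delta X_N$. The latter is finite (as $0\le n\le N$) and does not depend on $[g_j]$, so it drops out of the argmax, establishing that \eqref{eq:MaxLikGTilde} and the right-hand side of \eqref{eq:ProbKLDiv}, i.e.\ Problem \ref{prob:MB}, have the same solution. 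The main obstacle is keeping the two layers of combinatorial bookkeeping straight; the one genuinely delicate point is that $\sum_{\pi\in\Pi_N}\prod_i g_{\pi(i)}(X_i)$ equals $g$ of the union only up to the cardinality-dependent factor $(N-n)!$, which is harmless precisely because it enters additively inside the logarithm and depends only on the integration variables, never on $[g_j]$.
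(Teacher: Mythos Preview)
Your proposal is correct and follows essentially the same route as the paper: it proves the analogue of the paper's Lemma~\ref{lem:MBIntegral} (the ``unrolling'' of the set integral into nested Bernoulli integrals), then establishes the combinatorial identity $\sum_{\pi\in\Pi_N}\prod_i g_{\pi(i)}(X_i)=(N-n)!\,g(\bigcup_i X_i)$ that the paper states more tersely in its proof of the theorem, and finally uses the additivity of $\log(N-n)!$ and its independence of $[g_j]$, which is the content of the paper's Lemma~\ref{lem:CardConst}. Your treatment of the $(N-n)!$ factor is in fact slightly more explicit than the paper's, but the structure and ideas are identical.
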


In theorem \ref{th:BFMBEquivExact}, $\Pi_N$ is the set of permutations, as in definition \ref{def:CompletePermutations}. The proof of the theorem is in appendix \ref{ss:BFMBEquivExact}. We will see in the following section that the form \eqref{eq:MaxLikGTilde} is suitable for applying EM and obtaining a tractable approximate solution.

\subsection{Approximate solution of BFMB}
\label{ss:VMB}
We propose an approximate solution of \eqref{eq:MaxLikGTilde} based on minimisation of an upper bound of the true objective, following a similar process to the derivation of EM in section \ref{ss:BackEM}. The correspondence between the underlying Bernoulli distribution $f_{h_i}(X)$ and the Bernoulli component $g_j(X)$ in the best-fitting distribution is treated as missing data; the algorithm proceeds by alternating between estimating the correspondence (E-step), and optimising $g(\cdot)$ to best fit the completed distribution (M-step). In the appendix, we show that the missing data acts in a equivalent manner to KLSJPDA's selection of an ordered distribution in the same unordered family that is best able to be approximated by the desired distribution family. There is a separate missing data distribution for each component $a$ in the multi-Bernoulli mixture; the distribution under the $a$-th component is $q_a(\pi)$. We constrain $q_a(\pi) \geq 0\;\forall\;a,\pi$, and $\sum_{\pi\in\Pi_N}q_a(\pi)=1\;\forall\;a$. Accordingly, solution of \eqref{eq:MaxLikGTilde} is equivalent to minimisation of $J\big([g_j]\big)$, where
\ifCLASSOPTIONdraftcls
\begin{align}
J\big(&[g_j]\big) = -\sum_{a\in{\cal A}}w_a \idotsint{ \prod_{i=1}^N f_{h_i}(X_i)} 
{\cdot \log\left(\sum_{\pi\in\Pi_N}\prod_{i=1}^N g_{\pi(i)}(X_i)\right) \delta X_1\cdots \delta X_N} \label{eq:EMStep1}\displaybreak[0] \\
=& \sum_{a\in{\cal A}}w_a \left(\sum_{\pi\in\Pi_N}q_a(\pi)\right)\idotsint{ \prod_{i=1}^N f_{h_i}(X_i)} 
\cdot \log\left(\frac{\sum_{\pi\in\Pi_N}q_a(\pi)}{\sum_{\pi\in\Pi_N}\prod_{i=1}^N g_{\pi(i)}(X_i)}\right) \delta X_1\cdots \delta X_N  \label{eq:EMStep2}\displaybreak[0]\\
\leq& \sum_{a\in{\cal A},\pi\in\Pi_N}w_a q_a(\pi)\idotsint{ \prod_{i=1}^N f_{h_i}(X_i)} 
 \cdot \log\left(\frac{q_a(\pi)}{\prod_{i=1}^N g_{\pi(i)}(X_i)}\right) \delta X_1\cdots \delta X_N \label{eq:EMStep3}\displaybreak[0]\\
=& \sum_{a\in{\cal A},\pi\in\Pi_N}w_a q_a(\pi)\log q_a(\pi) 
-\sum_{a\in{\cal A},\pi\in\Pi_N} w_a q_a(\pi) \sum_{i=1}^N \int{f_{h_i}(X_i)\log g_{\pi(i)}(X_i)\delta X_i} \label{eq:EMStep4}\displaybreak[0]\\
\leq& \;T\cdot\sum_{a\in{\cal A},\pi\in\Pi_N}w_a q_a(\pi)\log q_a(\pi) 
-\sum_{a\in{\cal A},\pi\in\Pi_N} w_a q_a(\pi) \sum_{i=1}^N \int{f_{h_i}(X_i)\log g_{\pi(i)}(X_i)\delta X_i} \label{eq:EMStep5}\\
\triangleq& \tilde{J}_T\big([g_j],[q_a(\pi)]\big) \notag
\end{align}
\else
\begin{align}
J\big(&[g_j]\big) = -\sum_{a\in{\cal A}}w_a \idotsint{ \prod_{i=1}^N f_{h_i}(X_i) \cdot} \notag \\
& {\cdot \log\left(\sum_{\pi\in\Pi_N}\prod_{i=1}^N g_{\pi(i)}(X_i)\right) \delta X_1\cdots \delta X_N} \label{eq:EMStep1}\displaybreak[0] \\
=& \sum_{a\in{\cal A}}w_a \left(\sum_{\pi\in\Pi_N}q_a(\pi)\right)\idotsint{ \prod_{i=1}^N f_{h_i}(X_i) \cdot} \notag \\
& \cdot \log\left(\frac{\sum_{\pi\in\Pi_N}q_a(\pi)}{\sum_{\pi\in\Pi_N}\prod_{i=1}^N g_{\pi(i)}(X_i)}\right) \delta X_1\cdots \delta X_N  \label{eq:EMStep2}\displaybreak[0]\\
\leq& \sum_{a\in{\cal A},\pi\in\Pi_N}w_a q_a(\pi)\idotsint{ \prod_{i=1}^N f_{h_i}(X_i) \cdot} \notag \\
& \cdot \log\left(\frac{q_a(\pi)}{\prod_{i=1}^N g_{\pi(i)}(X_i)}\right) \delta X_1\cdots \delta X_N \label{eq:EMStep3}\displaybreak[0]\\
=& \sum_{a\in{\cal A},\pi\in\Pi_N}w_a q_a(\pi)\log q_a(\pi) \notag\\
&-\sum_{a\in{\cal A},\pi\in\Pi_N} w_a q_a(\pi) \sum_{i=1}^N \int{f_{h_i}(X_i)\log g_{\pi(i)}(X_i)\delta X_i} \label{eq:EMStep4}\displaybreak[0]\\
\leq& \;T\cdot\sum_{a\in{\cal A},\pi\in\Pi_N}w_a q_a(\pi)\log q_a(\pi) \notag\\
&-\sum_{a\in{\cal A},\pi\in\Pi_N} w_a q_a(\pi) \sum_{i=1}^N \int{f_{h_i}(X_i)\log g_{\pi(i)}(X_i)\delta X_i} \label{eq:EMStep5}\\
\triangleq& \tilde{J}_T\big([g_j],[q_a(\pi)]\big) \notag
\end{align}
\fi
As in section \ref{ss:BackEM}, \eqref{eq:EMStep2} multiplies by $\sum_{\pi\in\Pi_N}q_a(\pi)=1$ and similarly adds $\log(1)=0$, \eqref{eq:EMStep3} invokes the log-sum inequality, and \eqref{eq:EMStep4} replaces the log of a product with the sum of logs and simplifies. In \eqref{eq:EMStep5}, we observe that the first term is negative (since $0\leq q_a(\pi)\leq 1$), hence incorporating a multiplier $0\leq T\leq 1$ loosens the bound. In statistical physics this corresponds to the temperature (e.g., \cite{Yed01}). This is discussed further in section \ref{ss:ZeroTemp}. 

\begin{problem}\label{prob:BFMB_EM}
BFMB may be solved approximately by minimising the upper bound to the objective of problem \ref{prob:MB}:
\begin{equation}
\operatornamewithlimits{minimise~}_{[g_j],[q_a(\pi)]} \tilde{J}_T\big([g_j],[q_a(\pi)]\big)
\end{equation}
where $q_a(\pi)\geq 0$, $\sum_{\pi}q_a(\pi)=1$, $g_j(X)\geq 0$, $\int g_j(X)\delta X=1$, and $T\in[0,1]$ is the temperature (a constant to be selected).
\end{problem}

In most applications of EM, the missing data is estimated for each of a finite number of training samples. In turn, this guarantees that the log-sum inequality is tight at the optimum, and hence that the procedure will converge to a local minimum of the original likelihood function $J([g_j])$. In the appendix, we show that if we follow steps similar to those above in a case with fixed cardinality, and we allow missing data to select a different permutation for each joint state (i.e., replacing $q_a(\pi)$ with $q_X(\pi)$), the result is identical to KLSJPDA. Accordingly, the estimation of missing data in the formulation above can be seen as equivalent to the selection of a new ordered density in the same ordered family in KLSJPDA.

In the formulation above, we constrain the missing data to vary only with the global association hypothesis $a$. As a consequence of this, the log-sum inequality is not necessarily tight at the optimum, but the constraint is essential for tractability. A similar constraint is applied in SJPDA (i.e., reordering target indices for each hypothesis, not for each target state, albeit utilising a different objective function).

The standard method for solving the form of problem \ref{prob:BFMB_EM} is by block coordinate descent, alternating between minimisation with respect to $[g_j]$ (M-step), and $[q_a(\pi)]$ (E-step). These two steps can be solved as:
\begin{align}
g_j(X) =& \sum_{a\in{\cal A},\pi\in\Pi_N} w_a q_a(\pi) f_{h_{\pi^{-1}(j)}}(X) \label{eq:MStep} \\
q_a(\pi) \propto& \prod_{i=1}^N\exp\left\{\frac{1}{T}\int{f_{h_i}(X)\log g_{\pi(i)}(X)\delta X}\right\} \label{eq:EStep}
\end{align}
where $\pi^{-1}$ is the inverse of the permutation function $\pi$ (i.e., if $\pi(i) = j$ then $\pi^{-1}(j)=i$). If the distributions $[g_j]$ are constrained to be Bernoulli-Gaussian, \eqref{eq:MStep} is replaced by expressions matching the probability of existence, mean and covariance to the expression in \eqref{eq:MStep}. The Bernoulli-Gaussian form is convenient since it permits closed-form evaluation of \eqref{eq:EStep}. Specifically, if 
\begin{align}
f_h(X) &= \begin{cases}
1-r_h, & X = \emptyset \\
r_h\mathcal{N}\{x;\mu_h,\Sigma_h\}, & X=\{x\}
\end{cases} \label{eq:BG1} \\
g_j(X) &= \begin{cases}
1-\hat{r}_j, & X = \emptyset \\
\hat{r}_j\mathcal{N}\{x;\hat{\mu}_j,\hat{\Sigma}_j\}, & X=\{x\}
\end{cases} \label{eq:BG2}
\end{align}
then the M-step reduces to setting:
\ifCLASSOPTIONdraftcls
\begin{align}
\hat{r}_j &= \sum_{a\in{\cal A},\pi\in\Pi_N} w_a q_a(\pi) r_{h_{\pi^{-1}(j)}} \label{eq:BG_MStep1}\\
\hat{\mu}_j &= \frac{1}{\hat{r}_j}\sum_{a\in{\cal A},\pi\in\Pi_N} w_a q_a(\pi) r_{h_{\pi^{-1}(j)}}\mu_{h_{\pi^{-1}(j)}} \label{eq:BG_MStep2}\\
\hat{\Sigma}_j &= \frac{1}{\hat{r}_j}\sum_{a\in{\cal A},\pi\in\Pi_N} w_a q_a(\pi) r_{h_{\pi^{-1}(j)}}\Big\{\Sigma_{h_{\pi^{-1}(j)}} 
 + [\mu_{h_{\pi^{-1}(j)}}-\hat{\mu}_j][\mu_{h_{\pi^{-1}(j)}}-\hat{\mu}_j]^T \Big\} \label{eq:BG_MStep3}
\end{align}
\else
\begin{align}
\hat{r}_j &= \sum_{a\in{\cal A},\pi\in\Pi_N} w_a q_a(\pi) r_{h_{\pi^{-1}(j)}} \label{eq:BG_MStep1}\\
\hat{\mu}_j &= \frac{1}{\hat{r}_j}\sum_{a\in{\cal A},\pi\in\Pi_N} w_a q_a(\pi) r_{h_{\pi^{-1}(j)}}\mu_{h_{\pi^{-1}(j)}} \label{eq:BG_MStep2}\\
\hat{\Sigma}_j &= \frac{1}{\hat{r}_j}\sum_{a\in{\cal A},\pi\in\Pi_N} w_a q_a(\pi) r_{h_{\pi^{-1}(j)}}\Big\{\Sigma_{h_{\pi^{-1}(j)}} \notag\\
& + [\mu_{h_{\pi^{-1}(j)}}-\hat{\mu}_j][\mu_{h_{\pi^{-1}(j)}}-\hat{\mu}_j]^T \Big\} \label{eq:BG_MStep3}
\end{align}
\fi
while the integral required in the E-step becomes:
\ifCLASSOPTIONdraftcls
\begin{multline}\label{eq:BG_EStep}
-\int{f_{h_i}(X)\log g_j(X)\delta X} = -(1-r_{h_i})\log(1-\hat{r}_j) 
- r_{h_i}\log{\hat{r}_j} \\
+ \frac{r_{h_i}}{2} \Big\{ \tr(\hat{\Sigma}_j^{-1}\Sigma_{h_i}) 
+ [\mu_{h_i}-\hat{\mu}_j]^T\hat{\Sigma}_j^{-1}[\mu_{h_i}-\hat{\mu}_j] + \log|2\pi\hat{\Sigma}_j| \Big\}
\end{multline}
\else
\begin{multline}\label{eq:BG_EStep}
-\int{f_{h_i}(X)\log g_j(X)\delta X} = -(1-r_{h_i})\log(1-\hat{r}_j) \\
- r_{h_i}\log{\hat{r}_j} + \frac{r_{h_i}}{2} \Big\{ \tr(\hat{\Sigma}_j^{-1}\Sigma_{h_i}) \\
+ [\mu_{h_i}-\hat{\mu}_j]^T\hat{\Sigma}_j^{-1}[\mu_{h_i}-\hat{\mu}_j] + \log|2\pi\hat{\Sigma}_j| \Big\}
\end{multline}
\fi

\subsection{Zero temperature case}
\label{ss:ZeroTemp}
%
In the preliminary work \cite{Wil14b}, it was observed that the best performance in terms of the original objective (problem \ref{prob:MB}) occurs when we set $T=0$. In this case, E-step reverts to a LP, which can be implemented through finding the most likely assignment $\pi_a$ for each association hypothesis $a\in{\cal A}$ using methods such as the auction algorithm. This is referred to as the \emph{point-estimate} (or \emph{winner takes all}) variant of EM \cite{NeaHin98,GupChe10}. The difference between the cases with $T=1$ and $T=0$ is analogous to the difference between the EM algorithm for estimating the parameters of a Gaussian mixture (involving soft assignment of samples to mixture components), and the widely-used $k$-means algorithm (hard assignment of samples to mixture components).

Having found that the case with $T=0$ tends to yield solutions with lower KL divergence, we turn to analyse the structure of the problem in this case. We find that the geometry of the optimisation problem reveals a family of potential approximations.

\begin{theorem}\label{th:BFMB_PT_Form}
Problem \ref{prob:BFMB_EM} can be solved equivalently as:
\ifCLASSOPTIONdraftcls
\begin{equation}\label{eq:BFMB_No_Relax}
\operatornamewithlimits{minimise~}_{q(h,j)\in{\cal P}} - \sum_{j=1}^N\int{\left(
\sum_{h\in{\cal H}} q(h,j) f_h(X) 
\right)} 
\cdot \log \left(
\sum_{h\in{\cal H}} q(h,j) f_h(X) 
\right) \delta X 
\end{equation}
\else
\begin{multline}\label{eq:BFMB_No_Relax}
\operatornamewithlimits{minimise~}_{q(h,j)\in{\cal P}} - \sum_{j=1}^N\int{\left(
\sum_{h\in{\cal H}} q(h,j) f_h(X) 
\right)} \cdot \\
\cdot \log \left(
\sum_{h\in{\cal H}} q(h,j) f_h(X) 
\right) \delta X 
\end{multline}
\fi
\vspace{-3pt}
where the polytope ${\cal P}$ is:
\ifCLASSOPTIONdraftcls
\begin{equation}\label{eq:Constr_qhj_qapi}
{\cal P} = \Bigg\{
q(h,j) = 
\sum_{i=1}^N \left(\sum_{a=(h_1,\dots,h_N)\in{\cal A}|h_i=h} w_a \sum_{\pi\in\Pi_N|\pi(i)=j} q_a(\pi)\right) 
\Bigg| 
q_a(\pi)\geq 0, \sum_{\pi\in\Pi_N}q_a(\pi)=1
\Bigg\}
\end{equation}
\else
\begin{multline}\label{eq:Constr_qhj_qapi}
{\cal P} = \Bigg\{
q(h,j) = \\
\sum_{i=1}^N \left(\sum_{a=(h_1,\dots,h_N)\in{\cal A}|h_i=h} w_a \sum_{\pi\in\Pi_N|\pi(i)=j} q_a(\pi)\right) \\
\Bigg| 
q_a(\pi)\geq 0, \sum_{\pi\in\Pi_N}q_a(\pi)=1
\Bigg\}
\end{multline}
\fi
\end{theorem}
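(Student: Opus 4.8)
The plan is to treat the zero-temperature case ($T=0$, as considered throughout this subsection): eliminate the Bernoulli components $[g_j]$ from problem~\ref{prob:BFMB_EM} using the M-step minimiser, substitute them back in, and then observe that the resulting objective depends on the missing data $[q_a(\pi)]$ only through the aggregated weights $q(h,j)$, so that minimising over the product of simplices in $[q_a(\pi)]$ collapses to a minimisation over the polytope ${\cal P}$.

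First I would write out $\tilde{J}_0\big([g_j],[q_a(\pi)]\big) = -\sum_{a\in{\cal A},\pi\in\Pi_N}w_a q_a(\pi)\sum_{i=1}^N\int{f_{h_i}(X_i)\log g_{\pi(i)}(X_i)\delta X_i}$, substitute $j=\pi(i)$ in the inner sum, and collect all terms multiplying a given $\log g_j$; this gives $\tilde{J}_0 = -\sum_{j=1}^N\int{\tilde{f}_j(X)\log g_j(X)\delta X}$ with $\tilde{f}_j(X)\triangleq\sum_{a\in{\cal A},\pi\in\Pi_N}w_a q_a(\pi)f_{h_{\pi^{-1}(j)}}(X)$, which is a valid distribution (a convex combination of Bernoullis, since $\sum_{\pi}q_a(\pi)=1$ and $\sum_a w_a=1$). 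For each fixed feasible $[q_a(\pi)]$, every summand $-\int{\tilde{f}_j\log g_j\,\delta X}$ is a cross-entropy, so by Gibbs' inequality (exactly as in the derivation of the M-step~\eqref{eq:MStep}) it is minimised over $g_j$ at $g_j=\tilde{f}_j$, whence $\min_{[g_j]}\tilde{J}_0 = -\sum_{j=1}^N\int{\tilde{f}_j(X)\log\tilde{f}_j(X)\delta X}$.

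Next I would identify $\tilde{f}_j$ with $\sum_{h\in{\cal H}}q(h,j)f_h$: in $\sum_{a,\pi}w_a q_a(\pi)f_{h_{\pi^{-1}(j)}}(X)$, for each fixed $a=(h_1,\dots,h_N)$ partition $\Pi_N$ by the unique index $i$ with $\pi(i)=j$ (i.e.\ $i=\pi^{-1}(j)$), then partition ${\cal A}$ by the value $h=h_i$; the sum becomes $\sum_{h\in{\cal H}}f_h(X)\sum_{i=1}^N\sum_{a=(h_1,\dots,h_N)\in{\cal A}|h_i=h}w_a\sum_{\pi\in\Pi_N|\pi(i)=j}q_a(\pi)$, and the coefficient of each $f_h(X)$ is exactly the $q(h,j)$ of \eqref{eq:Constr_qhj_qapi}. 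Hence $\min_{[g_j]}\tilde{J}_0$ is precisely the objective of \eqref{eq:BFMB_No_Relax} evaluated at $\{q(h,j)\}$. Finally, since this value depends on $[q_a(\pi)]$ only through $\{q(h,j)\}$, and $\{q(h,j)\}$ ranges over exactly ${\cal P}$ as $[q_a(\pi)]$ ranges over $\{q_a(\pi)\ge 0,\,\sum_{\pi}q_a(\pi)=1\}$ (this being the definition of ${\cal P}$), I would conclude that $\min_{[g_j],[q_a(\pi)]}\tilde{J}_0$ equals the minimum of \eqref{eq:BFMB_No_Relax} over ${\cal P}$, and that a minimiser of one problem yields a minimiser of the other: for \eqref{eq:BFMB_No_Relax}, take any preimage $[q_a(\pi)]$ of the optimal $q(h,j)$ and set $g_j=\tilde{f}_j$; conversely, pass to the induced $q(h,j)$.

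I expect the main obstacle to be the combinatorial bookkeeping in the third step — keeping $\pi$, $\pi^{-1}$, the conditioning $\pi(i)=j$, and the grouping $h_i=h$ aligned so that the triple sum produced matches the definition of ${\cal P}$ verbatim. The elimination of $[g_j]$ and the Gibbs step are routine, being the same M-step argument already established in section~\ref{ss:VMB}.
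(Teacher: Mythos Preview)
Your proposal is correct and follows essentially the same route as the paper's proof: the paper substitutes the M-step minimiser \eqref{eq:MStep} into $\tilde{J}_T$ with $T=0$, performs the change of summation variable $i\mapsto j=\pi(i)$, collects to obtain the entropy form $-\sum_j\int\tilde{f}_j\log\tilde{f}_j$, and then rewrites $\tilde{f}_j=\sum_h q(h,j)f_h$ via the same combinatorial regrouping you describe. The only cosmetic difference is order of operations (you reindex first and then invoke Gibbs' inequality, whereas the paper plugs in the known M-step solution and then reindexes), and you spell out the final ``collapse to ${\cal P}$'' step more explicitly than the paper does.
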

Note that $q(h,j)$ is the sum over all tracks of the probability that a Bernoulli component in $f(X)$ that is utilising hypothesis $h$ is assigned to Bernoulli component $g_j$ in $g(X)$. The proof of theorem \ref{th:BFMB_PT_Form} is in appendix \ref{ss:BFMB_PT_Form}. The objective \eqref{eq:BFMB_No_Relax} is the sum of entropies of the simplified Bernoulli components, which was proposed as a heuristic in \cite{Wil12F1e}. We can return to a form that can be minimised via coordinate descent by analysing the following expression:
\ifCLASSOPTIONdraftcls
\begin{equation}
\tilde{J}_T([g_j],q(h,j)) 
= -\sum_{j=1}^N\int{\left(
\sum_{h\in{\cal H}} q(h,j) f_{h}(X) 
\right)} \log g_j(X) \delta X \label{eq:Relax5}
\end{equation}
\else
\begin{multline}
\tilde{J}_T([g_j],q(h,j)) \\
= -\sum_{j=1}^N\int{\left(
\sum_{h\in{\cal H}} q(h,j) f_{h}(X) 
\right)} \log g_j(X) \delta X \label{eq:Relax5}
\end{multline}
\fi
The minimum of \eqref{eq:Relax5} with respect to $[g_j(X)]$ occurs at $g_j(X)=\sum_h q(h,j)f_h(X)$, hence the minimisation of \eqref{eq:Relax5} with respect to $[g_j]$ and $q(h,j)$ is equivalent to the minimisation of \eqref{eq:BFMB_No_Relax} with respect to $q(h,j)$.

\subsection{Efficient approximation of feasible set}
\label{ss:Efficient}
Whereas the optimisation in the statement of problem \ref{prob:BFMB_EM} involves missing data $q_a(\pi)$ for every global association hypothesis $a\in{\cal A}$, theorem \ref{th:BFMB_PT_Form} provides a form of the objective which depends only on the vastly simplified representation $q(h,j)$, specifying the weight of single target hypothesis $h\in{\cal H}$ in the new Bernoulli component $g_j(X)$. This does not necessarily reduce complexity, however, as the feasible set ${\cal P}$ suffers from combinatorial complexity. It does, however, raise the prospect of tractable approximations based on relaxations of the polytope ${\cal P}$ that admit a compact description.

From \eqref{eq:Constr_qhj_qapi}, it is clear that any set of distributions $[q_a(\pi)]$ will yield $q(h,j)$ which satisfies the following constraints:
\begin{align}
\sum_{h\in{\cal H}} q(h,j) &= 1 \;\forall\; j\in\{1,\dots,N\}\label{eq:BFMB_Relaxed_Constr1}\\
\sum_{j=1}^N q(h,j) &= p_h \;\forall\; h\in{\cal H} \label{eq:BFMB_Relaxed_Constr2}
\end{align}
where
\begin{align}
p_h &= \sum_{i=1}^N p_i(h) \\
p_i(h) &= \sum_{a=(h_1,\dots,h_N)\in{\cal A}|h_i=h} w_a \label{eq:MargAssocProb}
\end{align}
This suggests a relaxation of the polytope ${\cal P}$ to the compact approximation:
\ifCLASSOPTIONdraftcls
\begin{equation}
{\cal M} = \Bigg\{
q(h,j)\geq 0 \Bigg|
\sum_{h\in{\cal H}} q(h,j) = 1 \;\forall\;j\in\{1,\dots,N\}, 
\sum_{j=1}^N q(h,j) = p_h \;\forall\;h\in{\cal H}
\Bigg\}
\end{equation}
\else
\begin{multline}
{\cal M} = \Bigg\{
q(h,j)\geq 0 \Bigg|
\sum_{h\in{\cal H}} q(h,j) = 1 \;\forall\;j\in\{1,\dots,N\}, \\
\sum_{j=1}^N q(h,j) = p_h \;\forall\;h\in{\cal H}
\Bigg\}
\end{multline}
\fi
The notation ${\cal M}$ is chosen due to its similarity to the marginal polytope approximation that is widely used in variational inference \cite{WaiJor08}. Many other approximations are possible, and approximations can be improved incrementally (e.g., \cite{Son10}).

The details of the resulting algorithm are shown in figure \ref{alg:VMB}; we refer to the method as variational MB (VMB). We assume that $f_h(X)$ is of the form \eqref{eq:BG1}, and constrain $g_j(X)$ to be of the form \eqref{eq:BG2} in order to find the best-fitting MB distribution with Bernoulli-Gaussian components. Even if the desire is to return a Gaussian mixture representation, the Bernoulli-Gaussian form is recommended for two reasons. Firstly, it permits closed-form evaluation of the objective, which is not otherwise possible.\footnote{This could be overcome by introducing additional missing data describing the correspondence of mixture components, but this would further harm tractability.} Secondly (and more importantly), it incorporates the desire that Bernoulli components be made local in state space. If the Bernoulli-Gaussian approximation is not made, there is little penalty for solutions which collect Gaussian mixture components that are arbitrarily dissimilar together in the same Bernoulli component. The Bernoulli-Gaussian form ensures that the simplified components are able to be approximated via a unimodal distribution. The algorithm is initialised with the marginal probabilities $p_i(h)$, as can be approximated efficiently using methods such as \cite{WilLau12} (as described in \cite{Wil12}).

\begin{figure}[tb]
\begin{algorithmic}[1]
\Procedure{VMB}{$N,{\cal H},[p_i(h)],[r_h],[\mu_h],[\Sigma_h]$}
\State $q(h,j) \gets p_j(h) \;\forall\;h\in{\cal H},j\in\{1,\dots,N\}$
\State $p_h \gets \sum_{i=1}^N p_i(h)\;\forall\;h\in{\cal H}$
\Repeat
\State $\hat{r}_j \gets \sum_{h\in{\cal H}} q(h,j) r_{h}\;\forall\;j$ 
\State $\hat{\mu}_j \gets \frac{1}{\hat{r}_j}\sum_{h\in{\cal H}} q(h,j) r_{h}\mu_{h}\;\forall\;j$ 
\State $v_{j,h} \gets \mu_{h}-\hat{\mu}_j\;\forall\;h,j$
\State $\hat{\Sigma}_j \gets \frac{1}{\hat{r}_j}\sum_{h\in{\cal H}} q(h,j) r_{h}\Big\{\Sigma_{h} + v_{j,h}v_{j,h}^T \Big\}\;\forall\;j$
\State Calculate $C(h,j)$ using \eqref{eq:BG_EStep} $\forall\;h,j$
\State Solve the LP:
\begin{align}
\operatornamewithlimits{minimise~}_{q(h,j)} & {\textstyle\sum_{h\in{\cal H}}\sum_{j=1}^N C(h,j) q(h,j)} \label{eq:VMB_LP} \\
\st & {\textstyle\sum_{h\in{\cal H}} q(h,j) = 1 \;\forall\; j} \notag\\
& {\textstyle\sum_{j=1}^N q(h,j) = p_h \;\forall\; h} \notag\\
& {\textstyle q(h,j) \geq 0\;\forall\;h,j} \notag
\end{align}
\Until{Sufficiently small progress is made in objective of LP from one iteration to next}
\State \textbf{return} $[\hat{r}_j],[\hat{\mu}_j],[\hat{\Sigma}_j],q(h,j)$
\EndProcedure
\end{algorithmic}
\caption{Pseudo-code for VMB algorithm. The output of the algorithm can be taken as either the probabilities of existence, means and covariances of the new Bernoulli-Gaussian components, or the weights $q(h,j)$, defining new Bernoulli-Gaussian mixture components $g_j(X)=\sum_h q(h,j)f_h(X)$.}
\label{alg:VMB}
\end{figure}

Finally, we note that the form of \eqref{eq:VMB_LP} is a network flow LP, referred to as a transportation problem \cite{Ber98}. Problems of this type admit rapid solution; we utilise a forward-reverse variant of the transport auction algorithm described in \cite{BerCas89a}.


\section{Approximate minimum mean OSPA estimation}
\label{sec:MMOSPA}
The second problem we consider is that of finding the minimum mean OSPA (MMOSPA) estimate. We modify the standard OSPA measure by omitting the leading $\frac{1}{n}$ factor so that the measure \emph{adds} over targets rather than \emph{averaging} over targets. 

\begin{problem}\label{prob:OSPA}
Find $\hat{X}$ that solves the optimisation
\begin{equation}
\operatornamewithlimits{argmin}_{\hat{X}} \int{ d(X,\hat{X})^p f(X) \delta X} 
\end{equation}
where if $X=\{x_1,\dots,x_n\}$, $Y=\{y_1,\dots,y_m\}$ and $n\geq m$,
\begin{equation}\label{eq:ModifiedOSPA}
d(X,Y) \triangleq \Bigg[
\min_{\pi\in\Pi_{n}}\sum_{j=1}^{m}d_c(x_{\pi(j)},y_j)^p + c^p(n-m)
\Bigg]^{\frac{1}{p}}
\end{equation}
\end{problem}

We formulate the estimate as $\hat{X}=\bigcup_{j=1}^N \hat{X}_j$, where for each $j$, $\hat{X}_j$ is Bernoulli, i.e., either $\hat{X}_j=\emptyset$ or $\hat{X}_j=\{\hat{x}_j\}$. The following lemma provides an equivalent form of problem \ref{prob:OSPA}.

\begin{lemma}\label{lem:OSPAequiv}
The solution of the following optimisation is the same as that of problem \ref{prob:OSPA}. 
\ifCLASSOPTIONdraftcls
\begin{equation}
\operatornamewithlimits{argmin}_{[\hat{X}_j]} \sum_{a\in{\cal A}}w_a \idotsint{ \prod_{i=1}^N f_{h_i}(X_i) } 
{ d\big([X_i],[\hat{X}_j]\big)^p  \delta X_1\cdots \delta X_N} 
\end{equation}
\else
\begin{multline}
\operatornamewithlimits{argmin}_{[\hat{X}_j]} \sum_{a\in{\cal A}}w_a \idotsint{ \prod_{i=1}^N f_{h_i}(X_i) } \\
{ d\big([X_i],[\hat{X}_j]\big)^p  \delta X_1\cdots \delta X_N} 
\end{multline}
\fi
where the optimisation is performed over the Bernoulli sets $\hat{X}_j$, $j\in\{1,\dots,N\}$, and
\begin{align}
d\big([X_i],[\hat{X}_j]\big) &= \min_{\pi\in\Pi_N}\left[
\sum_{i=1}^N d(X_i,\hat{X}_{\pi(i)})^p \right]^{\frac{1}{p}}
\end{align}
\end{lemma}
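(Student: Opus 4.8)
\textbf{Proof proposal for Lemma \ref{lem:OSPAequiv}.}

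The plan is to mirror the structure of the proof of Theorem \ref{th:BFMBEquivExact} (in appendix \ref{ss:BFMBEquivExact}), since the two statements have the same shape: an objective involving a set integral of $f(X)$ against a quantity (there $\log g(X)$, here $d(X,\hat X)^p$) is to be rewritten as an objective involving nested Bernoulli integrals over the $N$ components $f_{h_i}$ of each hypothesis, with a minimisation over permutations $\Pi_N$ replacing the implicit permutation invariance of the set. First I would substitute the MBM expansion \eqref{eq:MBM} for $f(X)$ into the objective of problem \ref{prob:OSPA}; by linearity of the set integral this produces $\sum_{a\in{\cal A}} w_a \int d(X,\hat X)^p \big(\sum_{\biguplus_i X_i = X}\prod_i f_{h_i}(X_i)\big)\delta X$. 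The standard RFS identity that converts a set integral of a product-over-partitions into a product of individual Bernoulli set integrals (the same identity used to pass from \eqref{eq:ProbKLDiv}/\eqref{eq:MB} to \eqref{eq:MaxLikGTilde} in Theorem \ref{th:BFMBEquivExact}) then turns the inner term into $\idotsint \prod_{i=1}^N f_{h_i}(X_i)\, d\big(\bigcup_i X_i,\hat X\big)^p\, \delta X_1\cdots\delta X_N$, because the Bernoulli components $f_{h_i}$ each put mass only on $\emptyset$ and singletons, so $\biguplus_i X_i$ ranges exactly over the ways of choosing a (possibly empty) singleton from each $f_{h_i}$.

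The remaining work is purely combinatorial: showing that $d\big(\bigcup_{i=1}^N X_i, \bigcup_{j=1}^N \hat X_j\big)^p = \min_{\pi\in\Pi_N}\sum_{i=1}^N d(X_i,\hat X_{\pi(i)})^p$, i.e., that the modified OSPA distance \eqref{eq:ModifiedOSPA} between the two assembled sets equals the best matching of the Bernoulli "slots" $X_i$ to the slots $\hat X_{\pi(i)}$, where each $d(X_i,\hat X_{\pi(i)})$ is itself the modified OSPA between two Bernoulli sets --- which evaluates to $d_c(x_i,\hat x_{\pi(i)})^p$ when both are singletons, to $c^p$ when exactly one is a singleton, and to $0$ when both are empty. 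The key observation is that an optimal assignment over the genuine points of $\bigcup_i X_i$ and $\bigcup_j \hat X_j$ (the inner $\min_{\pi\in\Pi_n}$ of \eqref{eq:ModifiedOSPA}, together with the $c^p(n-m)$ unmatched-point penalty) can always be realised as a permutation of the $N$ slots: matched point pairs correspond to slots $i$ with $\pi(i)$ both occupied, a point left unmatched in $X$ corresponds to an occupied $X_i$ sent to an empty $\hat X_{\pi(i)}$ (cost $c^p$), and symmetrically for $\hat X$; empty-to-empty slot pairs contribute $0$. Conversely any slot permutation induces a feasible point assignment of no greater cost. I would prove the two inequalities between the two sides by exhibiting these mutual constructions, taking care with the cap $d_c = \min\{c,d\}$ and with the case $n\neq m$ so that the $c^p(n-m)$ term is correctly accounted for by the empty slots. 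Because the modified OSPA drops the $1/n$ normalisation, there is no cardinality-dependent factor to track, which is precisely what makes the additive decomposition over slots exact.

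The main obstacle I expect is the combinatorial equivalence of the two minimisations over permutations, specifically verifying that restricting attention to slot-level matchings loses nothing: one must rule out that a cheaper point-level assignment could match a point in $X_i$ to a point in some $\hat X_j$ while the "bookkeeping" permutation sends $i$ elsewhere. This is handled by noting that the objective only depends on which pairs of occupied slots are matched (not on the labels), so any point-level optimum can be relabelled into a slot permutation of equal cost; the cap $d_c$ ensures matching two far-apart points is never worse than leaving both unmatched (cost $2c^p > c^p$... more precisely, $d_c\le c$ so $d_c^p \le c^p$, and leaving a matched-capable pair unmatched costs $2c^p$), so no optimal assignment is forced to leave points unmatched when a finite-cost match exists. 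Assembling these observations gives the claimed identity, and combined with the linearity and partition-integral steps above, establishes the lemma.
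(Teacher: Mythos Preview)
Your approach is essentially the paper's: the paper disposes of the lemma in one line, stating that it ``results directly from corollary \ref{cor:SetMBDecomp}'' (which packages lemma \ref{lem:MBIntegral}), and you are invoking precisely that machinery. The combinatorial step you spell out---that $d\big(\bigcup_i X_i,\bigcup_j \hat X_j\big)^p=\min_{\pi\in\Pi_N}\sum_i d(X_i,\hat X_{\pi(i)})^p$ for Bernoulli slots, verified by the occupied/empty case analysis and the cap $d_c\le c$---is exactly the hypothesis $\tilde v([X_i])=v(X)$ that corollary \ref{cor:SetMBDecomp} requires, which the paper leaves implicit; your argument for it is correct and your observation that dropping the $1/n$ normalisation is what makes the slot decomposition additive is the right point.
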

Lemma \ref{lem:OSPAequiv} results directly from corollary \ref{cor:SetMBDecomp} (from appendix \ref{ss:BFMBEquivExact}). Note that for Bernoulli sets, \eqref{eq:ModifiedOSPA} evaluates to:
\begin{equation}
d(X_i,\hat{X}_j) =
\begin{cases}
0,\;  X_i=\hat{X}_j=\emptyset \\
c,\;  X_i=\emptyset, \hat{X}_j\neq\emptyset 
    \mbox{ or } X_i\neq\emptyset, \hat{X}_j=\emptyset \\
d_c(x_i,\hat{x}_j), \; X_i=\{x_i\}, \hat{X}_j=\{\hat{x}_j\}
\end{cases}
\end{equation}

We will show that the following problem is closely related.

\begin{problem}\label{prob:SOSPA}
Find Bernoulli sets $[\hat{X}_j]$ that solve the optimisation
\ifCLASSOPTIONdraftcls
\begin{equation}
\operatornamewithlimits{argmin}_{[\hat{X}_j]} \sum_{a\in{\cal A}}w_a \idotsint{ \prod_{i=1}^N f_{h_i}(X_i) } 
{ s_\gamma\big([X_i],[\hat{X}_j]\big)^p  \delta X_1\cdots \delta X_N} 
\end{equation}
\else
\begin{multline}
\operatornamewithlimits{argmin}_{[\hat{X}_j]} \sum_{a\in{\cal A}}w_a \idotsint{ \prod_{i=1}^N f_{h_i}(X_i) } \\
{ s_\gamma\big([X_i],[\hat{X}_j]\big)^p  \delta X_1\cdots \delta X_N} 
\end{multline}
\fi
where $s_\gamma([X],[\hat{X}])$ is the softmax approximation of the OSPA distance:
\ifCLASSOPTIONdraftcls
\begin{equation}
s_\gamma\left([X_i],[\hat{X}_j])\right) = 
\left\{\frac{-1}{\gamma}\log\sum_{\pi\in\Pi_N} \exp \Bigg[
-\gamma\sum_{i=1}^N d(X_i,\hat{X}_{\pi(i)})^p \Bigg]\right\}^{\frac{1}{p}}
\end{equation}
\else
\begin{multline}
s_\gamma\left([X_i],[\hat{X}_j])\right) = \\
\left\{\frac{-1}{\gamma}\log\sum_{\pi\in\Pi_N} \exp \Bigg[
-\gamma\sum_{i=1}^N d(X_i,\hat{X}_{\pi(i)})^p \Bigg]\right\}^{\frac{1}{p}}
\end{multline}
\fi
\end{problem}

Problem \ref{prob:SOSPA} replaces the minimum in the OSPA definition with the function log-sum-exp, which is commonly referred to as \emph{softmax}; this function is illustrated in figure \ref{fig:softmax}. As $\gamma$ increases, softmax better approximates the maximum function. Note that the definition of $s_\gamma$ implicitly yields a function taking set arguments $s_\gamma(X,\hat{X})$ for $|X|\leq N$, $|\hat{X}|\leq N$ via corollary \ref{cor:SetMBDecomp}.

\begin{figure}
\centering
\includegraphics{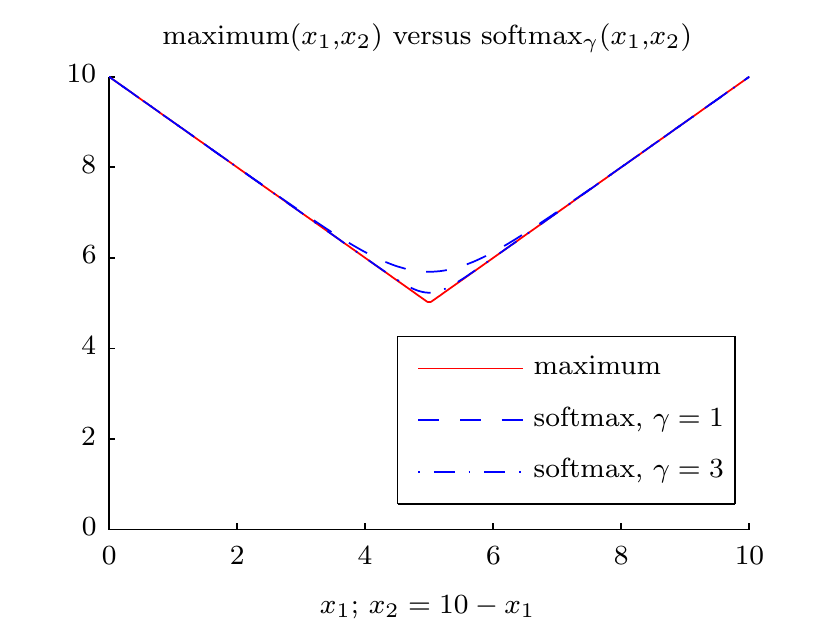}
\caption{Softmax function, $\textrm{softmax}_{\gamma}(X)\triangleq\frac{1}{\gamma}\log\sum_{x\in X}\exp\{\gamma x\}$.}
\label{fig:softmax}
\end{figure}

\subsection{Solution of MMOSPA via VMB}
\label{ss:MMOSPASolution}
%
%
%
%
%
%
Noting the similarity between \eqref{eq:MaxLikGTilde} and problem \ref{prob:SOSPA}, we show that an approximate solution can be found using the VMB algorithm. Since problem \ref{prob:SOSPA} is the softmax approximation of problem \ref{prob:OSPA}, this in turn provides an approximate MMOSPA estimator. We refer to the resulting method as variational MMOSPA (VMMOSPA). Specifically, with $p=2$, for each $j$ we adopt the parameterisation $(\hat{r}_j,\hat{x}_j)$ where $\hat{r}_j\in\{0,1\}$, $\hat{X}_j=\emptyset$ if $\hat{r}_j=0$, and $\hat{X}_j=\{\hat{x}_j\}$ if $\hat{r}_j=1$. If we set:
\ifCLASSOPTIONdraftcls
\begin{equation}\label{eq:MOSPAJPDA}
g_j(X;\hat{r}_j,\hat{x}_j) = 
\begin{cases}
\exp\{-\gamma c^2\}^{\hat{r}_j}, & X=\emptyset \\
[\exp\{-\gamma d(x,\hat{x}_j)^2\} + \exp\{-\gamma c^2\}]^{\hat{r}_j} \cdot & \\
\quad \cdot \exp\{-\gamma c^2\}^{1-\hat{r}_j} & X=\{x\}
\end{cases}
\end{equation}
\else
\begin{multline}\label{eq:MOSPAJPDA}
g_j(X;\hat{r}_j,\hat{x}_j) = \\
\begin{cases}
\exp\{-\gamma c^2\}^{\hat{r}_j}, & X=\emptyset \\
[\exp\{-\gamma d(x,\hat{x}_j)^2\} + \exp\{-\gamma c^2\}]^{\hat{r}_j} \cdot & \\
\quad \cdot \exp\{-\gamma c^2\}^{1-\hat{r}_j} & X=\{x\}
\end{cases}
\end{multline}
\fi
then the objective of problem \ref{prob:MB} corresponds to that of problem \ref{prob:SOSPA}.\footnote{It is out by a multiplicative constant $\frac{1}{\gamma}$, which we subsequently incorporate (although its absence would not affect the location of the minima). Note also that $d_c(x_i,\hat{x}_j)$ is replaced by its softmax approximation in \eqref{eq:MOSPAJPDA}.} Subsequently, we relax the feasible set to $\hat{r}_j\in[0,1]$, substitute \eqref{eq:MOSPAJPDA} into \eqref{eq:Relax5}, and repeat the EM derivation to incorporate additional missing data $q_{h,j}(b)$, $b\in\{0,1\}$ in order to handle the sum in \eqref{eq:MOSPAJPDA} (as in \eqref{eq:EMStep2} and \eqref{eq:EMStep3}). We also divide through by $\gamma$, and incorporate a smoothing term $\frac{1}{\gamma}\sum_j[\hat{r}_j\log\hat{r}_j + (1-\hat{r}_j)\log(1-\hat{r}_j)]$ in order to gradually converge on integral values of $\hat{r}_j$ as $\gamma\rightarrow\infty$.\footnote{Without the smoothing term, the first optimisation of $\hat{r}_j$ would produce integral values, and if $\hat{r}_j=0$, it would remain so for all subsequent iterations.} The resulting modified objective is:
\ifCLASSOPTIONdraftcls
\begin{multline}
\breve{J}([\hat{r}_j],[\hat{x}_j],[q_{h,j}(b)],q(h,j)) =  
\sum_{j=1}^N\sum_{h\in{\cal H}}q(h,j) \Bigg\{ c^2 (1-r_h)\hat{r}_j + c^2 r_h(1-\hat{r}_j) \\
+ \frac{1}{\gamma}r_h\hat{r}_j\sum_{b=0}^1 q_{h,j}(b)\log q_{h,j}(b) 
+ r_h\hat{r}_j\left[q_{h,j}(0)c^2 + q_{h,j}(1)\int{f_h(x)d(x,\hat{x}_j)^2 \mathrm{d} x} \right]\Bigg\} \\
+ \frac{1}{\gamma}\sum_{j=1}^N \left\{ \hat{r}_j\log\hat{r}_j + (1-\hat{r}_j)\log(1-\hat{r}_j) \right\}
\end{multline}
\else
\begin{multline}
\breve{J}([\hat{r}_j],[\hat{x}_j],[q_{h,j}(b)],q(h,j)) =  \\
\sum_{j=1}^N\sum_{h\in{\cal H}}q(h,j) \Bigg\{ c^2 (1-r_h)\hat{r}_j + c^2 r_h(1-\hat{r}_j) \\
+ \frac{1}{\gamma}r_h\hat{r}_j\sum_{b=0}^1 q_{h,j}(b)\log q_{h,j}(b) \\
+ r_h\hat{r}_j\left[q_{h,j}(0)c^2 + q_{h,j}(1)\int{f_h(x)d(x,\hat{x}_j)^2 \mathrm{d} x} \right]\Bigg\} \\
+ \frac{1}{\gamma}\sum_{j=1}^N \left\{ \hat{r}_j\log\hat{r}_j + (1-\hat{r}_j)\log(1-\hat{r}_j) \right\}
\end{multline}
\fi
This can be optimised by block coordinate descent, iteratively applying the following equations:
\ifCLASSOPTIONdraftcls
\begin{align}
q_{h,j}(b) \propto& 
\begin{cases}
\exp(-\gamma c^2), & b = 0 \\
\exp\left(-\gamma\int{f_h(x)d(x,\hat{x}_j)^2 \mathrm{d} x}\right), & b=1
\end{cases} \\
q(h,j) =& \mbox{ solution of the LP \eqref{eq:VMB_LP} using $C(h,j)$ in \eqref{eq:MM_Cost}} \notag \\
C(h,j) =& 
\frac{1}{\gamma}r_h\hat{r}_j\sum_{b=0}^1 q_{h,j}(b)\log q_{h,j}(b) 
+ c^2[(1-r_h)\hat{r}_j + r_h(1-\hat{r}_j) + r_h\hat{r}_j q_{h,j}(0)] \notag\\
&  + r_h \hat{r}_j q_{h,j}(1)\int{f_h(x)d(x,\hat{x}_j)^2 \mathrm{d} x} \label{eq:MM_Cost}\\
\hat{x}_j =& \frac{\sum_{h\in{\cal H}} q(h,j) r_h q_{h,j}(1) \mu_h}{\sum_{h\in{\cal H}} q(h,j) r_h q_{h,j}(1)} \\
\hat{r}_j =& \frac{\alpha_j}{\alpha_j+\beta_j} \\
\alpha_j &= \exp\Bigg\{
-\sum_{h\in{\cal H}} q(h,j)\Bigg[
\gamma c^2(1-r_h + r_h q_{h,j}(0)) 
+ r_h\sum_{b=0}^1 q_{h,j}(b)\log q_{h,j}(b) \notag\\ 
& + \gamma r_h q_{h,j}(1)\int{f_h(x)d(x,\hat{x}_j)^2 \mathrm{d} x}
\Bigg]\Bigg\} \\
\beta_j &= \exp\Bigg\{
-\gamma c^2 \sum_{h\in{\cal H}}q(h,j)r_h
\Bigg\}
\end{align}
\else
\begin{align}
q_{h,j}(b) \propto& 
\begin{cases}
\exp(-\gamma c^2), & b = 0 \\
\exp\left(-\gamma\int{f_h(x)d(x,\hat{x}_j)^2 \mathrm{d} x}\right), & b=1
\end{cases} \displaybreak[0]\\
q(h,j) =& \mbox{ solution of the LP \eqref{eq:VMB_LP} using $C(h,j)$ in \eqref{eq:MM_Cost}} \notag \\
C(h,j) =& 
\frac{1}{\gamma}r_h\hat{r}_j\sum_{b=0}^1 q_{h,j}(b)\log q_{h,j}(b) \notag\\
& + c^2[(1-r_h)\hat{r}_j + r_h(1-\hat{r}_j) + r_h\hat{r}_j q_{h,j}(0)] \notag\\
&  + r_h \hat{r}_j q_{h,j}(1)\int{f_h(x)d(x,\hat{x}_j)^2 \mathrm{d} x} \label{eq:MM_Cost}\displaybreak[0]\\
\hat{x}_j =& \frac{\sum_{h\in{\cal H}} q(h,j) r_h q_{h,j}(1) \mu_h}{\sum_{h\in{\cal H}} q(h,j) r_h q_{h,j}(1)} \displaybreak[0]\\
\hat{r}_j =& \frac{\alpha_j}{\alpha_j+\beta_j} \\
\alpha_j &= \exp\Bigg\{
-\sum_{h\in{\cal H}} q(h,j)\Bigg[
\gamma c^2(1-r_h + r_h q_{h,j}(0)) \notag\\
& + r_h\sum_{b=0}^1 q_{h,j}(b)\log q_{h,j}(b) \notag\\
& + \gamma r_h q_{h,j}(1)\int{f_h(x)d(x,\hat{x}_j)^2 \mathrm{d} x}
\Bigg]\Bigg\} \\
\beta_j &= \exp\Bigg\{
-\gamma c^2 \sum_{h\in{\cal H}}q(h,j)r_h
\Bigg\}
\end{align}
\fi
In the Bernoulli-Gaussian case, $\int{f_h(x)d(x,\hat{x}_j)^2 \mathrm{d} x} = ||\mu_h-\hat{\mu}_j||^2 + \tr\Sigma_h$. As we gradually increase $\gamma\rightarrow\infty$, both $\hat{r}_j$ and $q_{h,j}(b)$ converge to integral solutions (as the softmax approximation converges to the maximum function).


\section{Experiments}
\label{sec:Experiments}
In order to evaluate the performance of the proposed methods in a challenging scenario, we utilise the experiments from \cite{Wil12}. The scenarios involve $n\in\{6,10,20\}$ targets which are in close proximity at the mid-point of the simulation, achieved by initialising at the mid-point and running forward and backward dynamics. We consider two cases for the mid-point initialisation (i.e., $t=100$):
\begin{align*}
\mbox{Case 1: } & x_{100} \sim \mathcal{N}\{0,10^{-6}\times\mathbf{I}_{4\times 4}\} \\
\mbox{Case 2: } & x_{100} \sim \mathcal{N}\{0,0.25\times\mathbf{I}_{4\times 4}\} 
\end{align*}
where the target state is position and velocity in two dimensions. Snapshots of one dimension of both cases are shown in figure \ref{fig:Scenario}. Case 1 represents a worst-case scenario for coalescence, since targets are completely indistinguishable (in position and velocity) at the mid-point. In case 2, there is a discernible difference in velocity, hence the effect is expected to be somewhat reduced. In case 1, targets all exist throughout the simulation (tracks are not pre-initialised). In case 2, the targets are born at times $\{0,10,\dots,10 (n-1)\}$ (any targets not existing prior to time $t=100$ are born at that time; consequently, for case 2 with $n=20$, ten targets are born at time $t=100$). Targets follow a linear-Gaussian model with nominally constant velocity, $x_t = \mathbf{F} x_{t-1} + w_t$, where $w_t\sim\mathcal{N}\{0,\mathbf{Q}\}$, 
\[
\mathbf{F} = \left[\begin{array}{cc}
1 & T \\
0 & 1
\end{array}\right] \otimes \mathbf{I}_{2\times 2}, \quad
\mathbf{Q} = q \left[\begin{array}{cc}
T^3/3 & T^2/2 \\
T^2/2 & T
\end{array}\right] \otimes \mathbf{I}_{2\times 2}
\]
and $q = 0.01$, $T=1$. 

\begin{figure}[tb]
\centering
\includegraphics[width=3.3in]{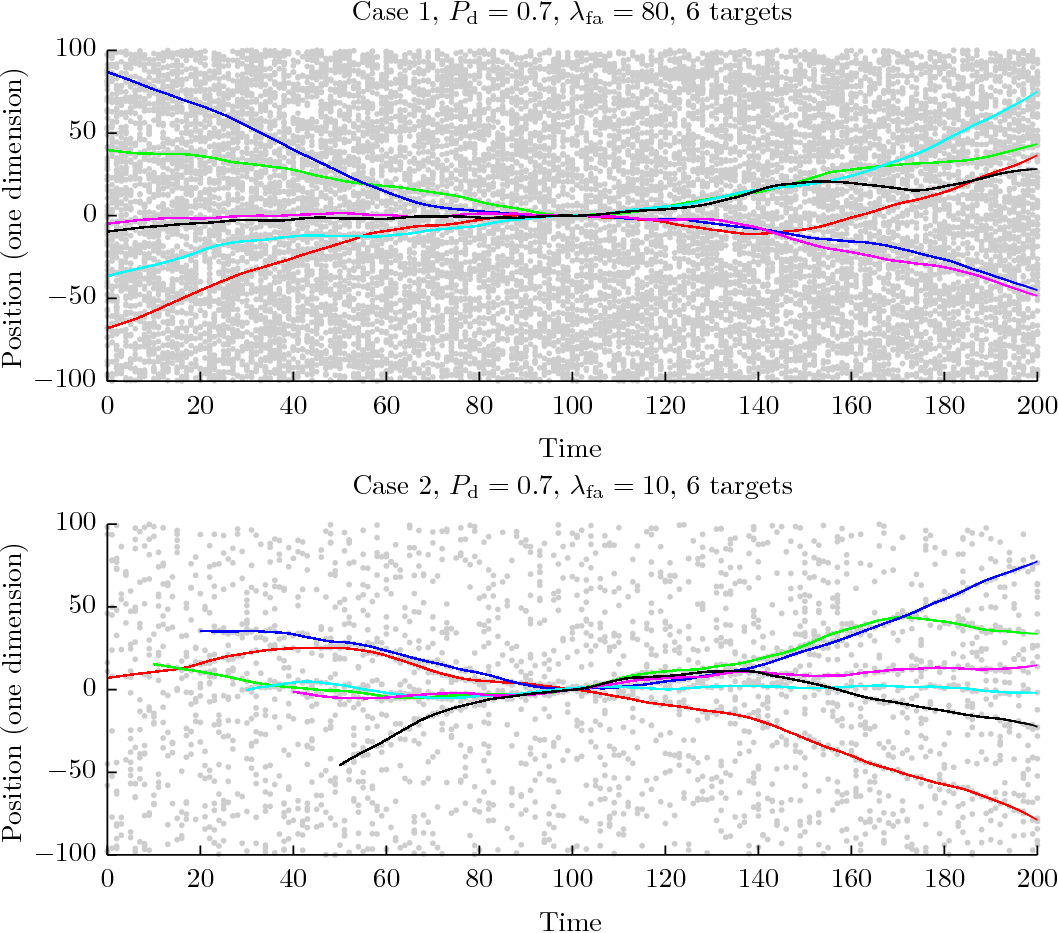}
\vspace{-6pt}
\caption{One dimension of a single Monte Carlo run of the scenario cases 1 and 2, adopted from \cite{Wil12}. Target trajectories are shown in colours, and measurements are shown in grey. Both use $P_\textrm{d}=0.7$; top and bottom have the expected number of false alarms set to $80$ and $10$ respectively.}
\label{fig:Scenario}
\vspace{-12pt}
\end{figure}

Target-originated measurements provide position corrupted by Gaussian noise with unit variance. False alarms occur according to a PPP, uniform on the region $[-100,100]^2$. Cases are considered with the expected number of false alarms per scan as $\lambda_\textrm{fa}\in\{10,40,80\}$, and with $P_\textrm{d}\in\{0.3,0.5,0.7,0.98\}$, representing a range of SNR values. 

For each case, $200$ Monte Carlo trials are executed, each with both randomly generated trajectories, and randomly generated measurements. Each algorithm is tested using the same Monte Carlo trials. Further details of the simulations and implementation can be found in \cite{Wil12}. Marginal association probabilities (e.g., \eqref{eq:MargAssocProb}) are calculated approximately using the variational method of \cite{WilLau12}. The recycling method of \cite{Wil12F2} is applied to Bernoulli components with a probability of existence less than $0.1$. The VMB algorithm is applied to clusters of MB components (tracks) which share measurements (i.e., any hypothesis in the track). The GM-CPHD \cite{VoVo07}, MOMB and TOMB algorithms extract estimates by determining the mode of the cardinality distribution ($\hat{n}$), and outputting the most likely hypothesis of the $\hat{n}$ Bernoulli components  with the highest probability of existence (or the Gaussian components with the highest weight for CPHD). VMB includes the estimate for each simplified Bernoulli component $g_j(X)$ of the form \eqref{eq:BG2} which satisfies
\begin{equation}
\hat{r}_j \geq \left[1 + \max\left(0,1-\frac{\tr \hat{\Sigma}_j}{c^2}\right)\right]^{-1}
\end{equation}
where $c=20$. When the Bernoulli components are well-spaced (as VMB generally achieves), this can be shown to minimise an upper bound on MOSPA. Dashed lines show the performance of each algorithm combined with the VMMOSPA estimator (again, with $c=20$), which estimates the number of targets and their states via the optimisation procedure in section \ref{ss:MMOSPASolution}.

\begin{figure*}[p]
\centering
\includegraphics[width=1\linewidth]{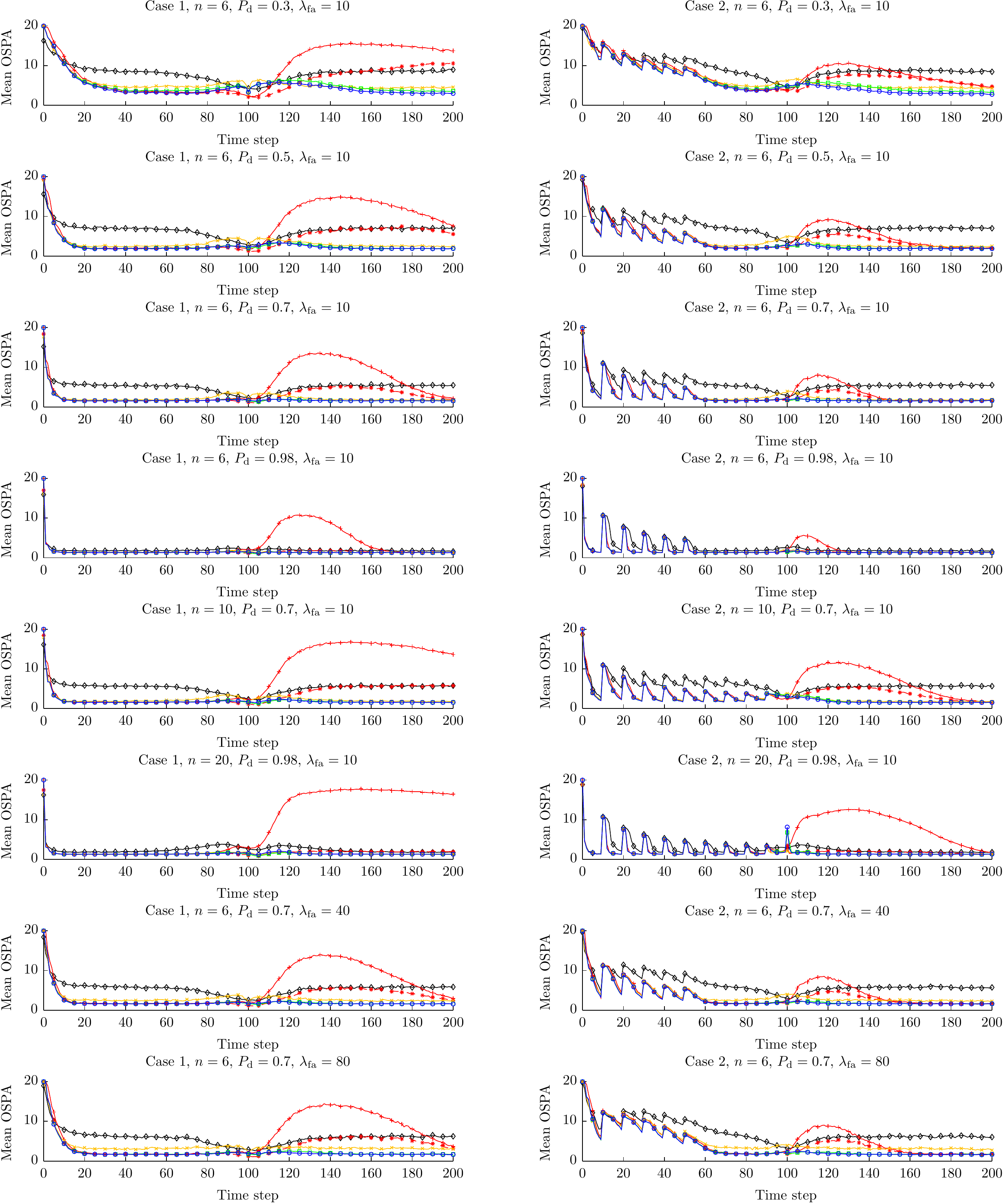}
\caption{Results of scenario. TOMB (similar to JIPDA) shown as `{\color[rgb]{1,0,0}$+$}' (with solid line), TOMB with VMMOSPA estimator shown as `{\color[rgb]{1,0,0}$*$}' (with dashed line) MOMB (similar to MeMBer/CB-MeMBer) shown `{\color[rgb]{1,0.75,0}$\times$}', CPHD shown as `$\diamond$', VMB (retaining a Gaussian mixture representation of Bernoulli components) shown as `{\color[rgb]{0,0,1}$\circ$}', and VMB-G (the Gaussian approximation of VMB) shown as `{\color[rgb]{0,1,0}$\square$}'.}
\label{fig:ScenarioResults}
\end{figure*}

The scenarios examined are exceptionally challenging due to the large number of targets in close proximity. While others have considered larger numbers of targets, these are generally positioned uniformly in space and rarely come into close contact. Cases such as this can be effectively decoupled into series of single target tracking problems. In the present study, up to 20 targets have effectively the same position and velocity at the mid-point in time, and the dependency between targets is inescapable. 

The results are shown in figure \ref{fig:ScenarioResults}.\footnote{This paper has supplementary downloadable material available at \url{http://ieeexplore.ieee.org}, provided by the authors. This includes videos of sample Monte Carlo trials illustrating the scenario and the behaviour of the proposed methods. The material is 12 MB in size.}
The $x$-axis shows time in the scenario, while the $y$ axis shows the average OSPA with $p=1$ and $c=20$. The large error suffered by TOMB (solid red) commencing shortly after time 100 corresponds to coalescence, similar to that experienced by JPDA/JIPDA. This occurs when targets have been closely spaced, and begin to separate. TOMB maintains a Gaussian mixture for each Bernoulli component (track), and outputs its estimate as the mean of the highest weighted Gaussian mixture component. After targets have been closely spaced, the Gaussian mixture for each target contains components representing all targets. Consequently, the highest weight component for all tracks can fall on the same target, leaving all other targets without estimates. TOMB using the VMMOSPA estimate (dashed red) resolves this difficulty to an extent, restoring performance to at least that obtained by the CPHD. The gain in performance is particularly good in the cases with $P_\textrm{d}=0.98$. The initialisation used for the method is the estimate that would have been generated using the MOMB algorithm from the current time step (which requires minimal calculation beyond that needed for TOMB). It is likely that better performance could be obtained through some improved initialisation procedure. When applied alongside algorithms other than TOMB, VMMOSPA results in a minimal change in performance, hence these results are not shown. 

The results demonstrate the success of VMB in resolving the coalescence phenomenon. In almost all cases, VMB outperforms TOMB, MOMB and CPHD, and exhibits little deterioration over the error incurred when targets are well-spaced. The goal of this work was to combine the superior performance of TOMB in problems involving well-spaced targets with the robustness of MOMB in problems involving closely spaced targets: this has been achieved. The small difference between VMB-G (the Gaussian approximation of VMB, in green) and VMB (the version retaining a Gaussian mixture representation, in blue) is somewhat surprising. It is only in the cases with $P_\textrm{d}=0.3$ or $\lambda_\textrm{fa}=80$ that there is a discernible difference. This suggests that a well-chosen Gaussian representation is sufficient to represent the posterior uncertainty in all but the lowest SNR environments. 

The computation times for the methods are compared in table \ref{tab:RunTimes}. The table shows the average time to execute a complete Monte Carlo (MC) simulation (consisting of 201 time steps). Complete scenarios with six targets are completed in as little as two seconds, while scenarios with 20 targets are completed in as little as six seconds. The table demonstrates that VMB is highly tractable, requiring minimal computation beyond the TOMB method which it extends (and, when utilising the Gaussian approximating, saving significant computation).

\begin{table}
\caption{Average execution time for a complete MC trial (in seconds) for CPHD, TOMB, TOMB with VMMOSPA estimator (MM-TOMB), MOMB, VMB with Gaussian approximation (VMB-G) and VMB retaining Gaussian mixture representation.}
\label{tab:RunTimes}
\centering
\begin{tabular}{*{8}{@{\hspace*{1.75pt}}c@{\hspace*{1.75pt}}}}
Case, $n$ & $P_\textrm{d}$, $\lambda_\textrm{fa}$ & CPHD & TOMB & MM-TOMB & MOMB & VMB-G & VMB  \\ \hline 
$1$, $6$ & $0.98$, $10$ & $20$ & $3$ & $5$ & $3$ & $2$ & $3$ \\ \hline 
$1$, $6$ & $0.7$, $10$ & $24$ & $7$ & $14$ & $7$ & $3$ & $7$ \\ \hline 
$1$, $6$ & $0.5$, $10$ & $25$ & $9$ & $23$ & $9$ & $4$ & $9$ \\ \hline 
$1$, $6$ & $0.3$, $10$ & $31$ & $13$ & $27$ & $14$ & $8$ & $13$ \\ \hline 
$1$, $10$ & $0.7$, $10$ & $36$ & $19$ & $59$ & $19$ & $4$ & $21$ \\ \hline 
$1$, $20$ & $0.98$, $10$ & $58$ & $43$ & $76$ & $40$ & $9$ & $51$ \\ \hline 
$1$, $6$ & $0.7$, $80$ & $114$ & $17$ & $26$ & $19$ & $11$ & $17$ \\ \hline 
$2$, $10$ & $0.7$, $10$ & $27$ & $10$ & $31$ & $10$ & $4$ & $10$ \\ \hline 
$2$, $20$ & $0.98$, $10$ & $33$ & $14$ & $43$ & $12$ & $6$ & $15$ \\ \hline 
$2$, $6$ & $0.98$, $10$ & $19$ & $2$ & $3$ & $2$ & $2$ & $2$ \\ \hline 
$2$, $6$ & $0.5$, $10$ & $23$ & $6$ & $11$ & $7$ & $4$ & $6$
\end{tabular}
\end{table}

\subsection{Comparison to SJPDA, KLSJPDA and MMOSPA}
\label{ss:ResSJPDA}
In order to demonstrate the difference in the accuracy/computation trade-off between the proposed method and existing approaches, we evaluate each on the scenario from the previous section with six targets, $P_\textmd{d}=0.7$ and $\lambda_\textrm{fa}=10$. We compare VMB-G to SJPDA, KLSJPDA \cite{SveSve11}, and KLSJPDA using the MMOSPA estimator \cite{GueSve10}. In order to make a fair comparison, all methods are preinitialised with the true target positions, and are not seeking to identify new targets through the scenario, or estimate whether targets have departed. The implementation of SJPDA \cite{SveSve11} utilises the same Java auction code that was developed for VMB.\footnote{The concave minimisation is performed by iteratively optimising the linearisation of the objective at the last iteration's solution.} KLSJPDA \cite{SveSve11} is implemented by drawing $1000$ samples from the joint posterior PDF of the targets. The MMOSPA estimator \cite{GueSve10} is evaluated by first executing KLSJPDA using $2000$ samples (to obtain a posterior approximation), then using the MMOSPA estimator with the same samples to obtain estimates.

The results of the scenario are shown in figure \ref{fig:SJPDAResults}. The difference in performance between the four methods is seen to be small. KLSJPDA and VMB exhibit very similar performance through the period. SJPDA appears to have slightly better performance than the other methods from time $90$ to $105$, while the performance of KLSJPDA with the MMOSPA estimator is slightly worse. The MMOSPA estimator was observed to increase the spacing between its estimates during this period. If samples of true target positions were drawn from the joint target distribution at this time, this would be the optimal estimator, but in the specific scenario in which target spacing is much closer than the posterior distribution indicates, a slight performance reduction occurs. This is because the scenarios are constructed by initialising them at the mid-point in time, causing additional structure in the prior distribution which is not provided to the tracker.

The average execution time for a complete scenario is $54$ sec for SJPDA, $577$ sec for KLSJPDA, and $1477$ sec for KLSJPDA using the MMOSPA estimator. In comparison, the average execution time for VMB-G is $0.9$ sec. 

\begin{figure}[t]
\centering
\includegraphics[width=3.4in]{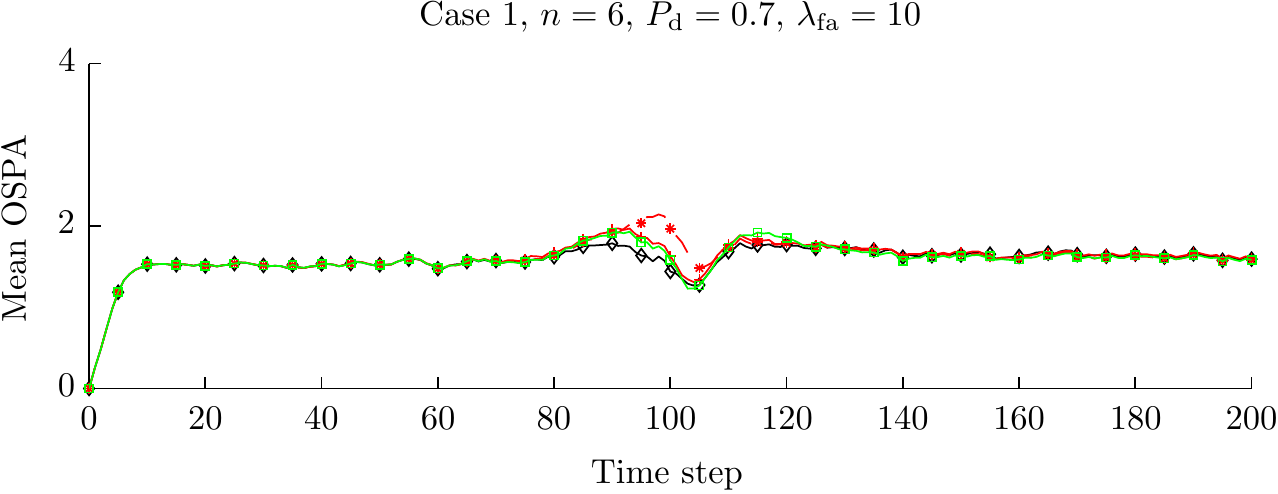}
\caption{Results of six target scenario. SJPDA shown as `$\diamond$', KLSJPDA shown as `{\color[rgb]{1,0,0}$+$}' (with solid line), KLSJPDA with MMOSPA estimator shown as `{\color[rgb]{1,0,0}$*$}' (with dashed line), and VMB-G shown as `{\color[rgb]{0,1,0}$\square$}'.}
\label{fig:SJPDAResults}
\end{figure}

\section{Conclusion}
\label{sec:Conclusion}
This paper has presented a principled, highly efficient, approximate method for finding the MB distribution that minimises the KL divergence from the full RFS distribution. To date, there have been two practical difficulties that have limited application of the JPDA/JIPDA family of trackers to problems involving closely-spaced targets. The first is the intractability of the calculation of marginal association probabilities (e.g., \eqref{eq:MargAssocProb}). A highly accurate approximation of these quantities based on variational methods was examined in \cite{WilLau12}. The second limitation was the problem of coalescence, which has been addressed in this paper in a highly tractable manner. Consequently, we believe that this work represents a significant step forward in the practical applicability of JPDA and related methods.

\appendices
\section{Proofs}
%

\subsection{CPHD minimises KL divergence}
\label{ss:CPHDMinKL}
\begin{IEEEproof}[Proof of Theorem \ref{th:CPHDMinKL}]
As shown in \eqref{eq:KL_ML}, minimising KL divergence is equivalent to maximising the likelihood. Substituting in the form of the i.i.d.\xspace cluster process utilised in the CPHD:
\ifCLASSOPTIONdraftcls
\begin{align}
\int f(X)\log g(X) \delta X 
&= \int f(X) \log \left[p_g(|X|) \prod_{x\in X}g(x) \right]\delta X \\
&= \int f(X) \log p_g(|X|) \delta X + \int f(X) \sum_{x\in X} \log g(x) \delta X \\
&= \sum_n p_f(n)\log p_g(n) + \int D_f(x) \log g(x) \mathrm{d} x
\end{align}
\else
\begin{align}
\int &f(X)\log g(X) \delta X \notag\\
&= \int f(X) \log \left[p_g(|X|) \prod_{x\in X}g(x) \right]\delta X \\
&= \int f(X) \log p_g(|X|) \delta X + \int f(X) \sum_{x\in X} \log g(x) \delta X \\
&= \sum_n p_f(n)\log p_g(n) + \int D_f(x) \log g(x) \mathrm{d} x
\end{align}
\fi
To optimise with respect to $p_g(n)$ and $g(x)$, we take gradients and use Lagrange multipliers to apply the constraints that $\sum_n p_g(n)=1$ and $\int g(x)\mathrm{d} x=1$. Subsequently, we find that $p_g(n)=p_f(n)$ and $g(x)=D_f(x)/\int D_f(x)\mathrm{d} x$, thus the parameters of the i.i.d.\xspace cluster process utilised by CPHD minimise the set KL divergence.
\end{IEEEproof}

\subsection{Decomposition of KL divergence}
\label{ss:BFMBEquivExact}
In this section, we prove theorem \ref{th:BFMBEquivExact}, which shows that the set integral in the KL divergence can be decomposed into a series of nested Bernoulli integrals, and that the summation over assignments of elements of the variable cardinality set $X$ can be simplified to a summation over permutations between the Bernoulli components in $f(X)$ and those in $g(X)$. The proof incorporates the following preliminary steps:
\begin{itemize}
\item Lemma \ref{lem:MBIntegral} shows that the multi-target set integral for a MB distribution can be decomposed into a series of Bernoulli set integrals
\item Corollary \ref{cor:SetMBDecomp} observes that this allows use of an alternative definition of a function for which the domain is the Bernoulli sets $[X_i]$ rather than the union $\biguplus_{i=1}^N{X_i}=X$
\item Lemma \ref{lem:CardConst} shows that if an alternative form $\tilde{v}(X)$ modifies $v(X)$ by a multiplicative factor (e.g., $\tilde{v}(X)=c(|X|)v(X)$), then the modification simply causes an additive constant (WRT $v$)
\end{itemize}

\begin{lemma}\label{lem:MBIntegral}
Suppose $f(X)$ is as defined in \eqref{eq:MBM}, and $v(X)$ is an arbitrary set-valued function. Then
\ifCLASSOPTIONdraftcls
\begin{equation}\label{eq:MBIntAlternative}
\int{f(X)v(X)\delta X} = 
\sum_{a\in{\cal A}}w_a \idotsint{ \prod_{i=1}^N f_{h_i}(X_i) } 
{ v\left({\textstyle \bigcup_{i=1}^N X_i} \right)  \delta X_1\cdots \delta X_N} 
\end{equation}
\else
\begin{multline}\label{eq:MBIntAlternative}
\int{f(X)v(X)\delta X} = \\
\sum_{a\in{\cal A}}w_a \idotsint{ \prod_{i=1}^N f_{h_i}(X_i) } 
{ v\left({\textstyle \bigcup_{i=1}^N X_i} \right)  \delta X_1\cdots \delta X_N} 
\end{multline}
\fi
where, as stated previously, we assume throughout that $a=(h_1,\dots,h_N)$.
\end{lemma}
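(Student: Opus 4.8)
The plan is to expand both sides of \eqref{eq:MBIntAlternative} using the definition of the set integral \eqref{eq:SetIntegral} and verify equality term by term. Starting from the left, I would substitute the MB mixture form \eqref{eq:MBM} into $\int f(X) v(X)\delta X$. The set integral decomposes $f(X) v(X)$ into a sum over cardinalities $n$, each with a $1/n!$ factor and an integral over $x_1,\dots,x_n$; inside, $f(\{x_1,\dots,x_n\})$ itself carries a sum over all ways of partitioning $\{x_1,\dots,x_n\}$ into labelled disjoint pieces $X_1,\dots,X_N$ (one per Bernoulli component), weighted by $\prod_i f_{h_i}(X_i)$ and by $w_a$. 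Because each $f_{h_i}$ is a Bernoulli distribution \eqref{eq:Bernoulli} — supported only on sets of size $0$ or $1$ — the only partitions with nonzero contribution are those assigning each $x_k$ to exactly one of the $X_i$'s as a singleton, with the remaining $X_i$'s empty; hence $\bigcup_i X_i = \{x_1,\dots,x_n\}$ and the disjoint union is automatically a genuine disjoint union on the support.

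The core of the argument is a combinatorial bookkeeping step: rewriting the ``sum over cardinalities, then over partitions of an $n$-point set'' as a ``product of $N$ independent Bernoulli set integrals.'' Concretely, I would show that
\begin{equation}
\idotsint \prod_{i=1}^N f_{h_i}(X_i)\, v\!\left({\textstyle\bigcup_i X_i}\right)\delta X_1\cdots\delta X_N
\end{equation}
expands, via $N$ nested copies of \eqref{eq:SetIntegral}, into a sum over the choices $|X_i|\in\{0,1\}$ for each $i$; if $S\subseteq\{1,\dots,N\}$ is the index set of components that are singletons $\{y_i\}$, this contributes $\prod_{i\notin S}(1-r_{h_i})\prod_{i\in S} r_{h_i} f_{h_i}(y_i)$ integrated against $v(\{y_i : i\in S\})$. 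Collecting terms by the cardinality $n=|S|$ and matching against the left-hand expansion, the only subtlety is the factor $1/n!$: on the left, the cardinality-$n$ term has a $1/n!$ but the set $f(\{x_1,\dots,x_n\})$ includes a sum over all $N$-tuples of disjoint subsets, which overcounts each unordered configuration; on the right there is no $1/n!$ but the nested integrals produce each labelled assignment exactly once. The claim is that these two overcounting/undercounting effects cancel precisely — i.e., the number of ordered partitions of an $n$-set into the $N$ labelled Bernoulli slots (with $n$ singletons and $N-n$ empties) times $1/n!$ equals the count arising on the right. This is where I would be most careful, keeping explicit track of which index $i$ in $a=(h_1,\dots,h_N)$ owns which point.

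The main obstacle, then, is not any deep idea but getting the permutation/partition counting exactly right without double-counting — in particular reconciling the factorial normalisations in the set integral on the two sides, and confirming that the support restriction imposed by the Bernoulli form \eqref{eq:Bernoulli} is what makes $\biguplus_i X_i$ collapse to $\bigcup_i X_i$ on the relevant terms. Once the $n=0$ (empty) term and a generic $n$-point term are matched, the sum over $a\in\mathcal{A}$ with weights $w_a$ factors out of both sides identically, and the identity follows. I would also note that this is essentially the standard ``fundamental identity'' relating a set integral of a multi-Bernoulli density to a product of univariate Bernoulli expectations, specialised to the mixture \eqref{eq:MBM}; the corollary \ref{cor:SetMBDecomp} announced afterwards is then immediate, since \eqref{eq:MBIntAlternative} shows the right-hand side depends on $v$ only through its values on finite sets, so one may freely redefine $v$ as a function of the tuple $[X_i]$.
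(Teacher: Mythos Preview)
Your approach is correct, but it differs from the paper's in an instructive way. The paper does \emph{not} use the Bernoulli structure of the components $f_{h_i}$ at all; instead it proves a general two-factor identity
\[
\int\Big(\sum_{Y\subseteq X}f_1(Y)f_2(X\setminus Y)\Big)v(X)\,\delta X
= \iint f_1(X)f_2(Y)\,v(X\cup Y)\,\delta X\,\delta Y
\]
for arbitrary RFS densities $f_1,f_2$ (by expanding the left side over cardinalities, grouping subsets $I\subseteq I_n$ by $|I|=m$, and re-indexing with $k=n-m$), and then iterates it $(N-1)$ times to split the $N$-fold convolution in \eqref{eq:MBM}. Your route---exploiting the support restriction $|X_i|\le 1$ to reduce the partition sum to injections $\{x_1,\dots,x_n\}\hookrightarrow\{1,\dots,N\}$ and then checking that the $n!$ relabellings cancel the $1/n!$ from \eqref{eq:SetIntegral}---is shorter and more concrete for this particular $f$, but it proves strictly less: the lemma as stated (and as used downstream) holds for any component densities, and the inductive two-factor argument makes that generality transparent while avoiding the somewhat delicate injection/partition bookkeeping you flag as the main obstacle. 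Either way the sum over $a\in\mathcal{A}$ is a spectator, as you note.
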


\begin{proof}
We first prove a case where $f$ is a convolution of two distributions $f_1$ and $f_2$, i.e., that
\ifCLASSOPTIONdraftcls
\begin{equation}
\int{\left(\sum_{Y\subseteq X}f_1(Y)f_2(X-Y)\right)v(X)\delta X} 
= \iint{f_1(X)f_2(Y)v(X\cup Y)\delta X \delta Y}
\end{equation}
\else
\begin{multline}
\int{\left(\sum_{Y\subseteq X}f_1(Y)f_2(X-Y)\right)v(X)\delta X} \\
= \iint{f_1(X)f_2(Y)v(X\cup Y)\delta X \delta Y}
\end{multline}
\fi
Starting from the left hand side (LHS) expression and using the shorthand $X_n=\{x_1,\dots,x_n\}$, and $I_n=\{1,\dots,n\}$:
\ifCLASSOPTIONdraftcls
\begin{align}
\mbox{LHS} =& \sum_{n=0}^\infty\frac{1}{n!}\int\sum_{Y\subseteq X_n}f_1(Y)f_2(X_n-Y)v(X_n)\mathrm{d} x_1\cdots \mathrm{d} x_n \notag\\
=& \sum_{n=0}^\infty\frac{1}{n!}\sum_{I\subseteq I_n}\int f_1\left({\textstyle\bigcup_{i\in I}}\{x_i\}\right) 
\cdot f_2\left({\textstyle\bigcup_{i\in I_n - I}}\{x_i\}\right) v(X_n)\mathrm{d} x_1\cdots \mathrm{d} x_n \label{eq:ConvStep1}\displaybreak[0]\\
=& \sum_{n=0}^\infty\sum_{m=0}^n\frac{1}{m!(n-m)!}\int f_1(\{x_1,\dots,x_m\}) 
\cdot f_2(\{x_{m+1},\dots,x_n\})v(X_n)\mathrm{d} x_1\cdots \mathrm{d} x_n \label{eq:ConvStep2} \displaybreak[0]\\
=& \sum_{m=0}^\infty\sum_{k=0}^\infty\frac{1}{m!k!}\iint f_1(\{x_1,\dots,x_m\}) 
\cdot f_2(\{x_{m+1},\dots,x_{m+k}\}) v(\{x_1,\dots,x_{m+k}\}) 
\cdot \mathrm{d} x_1\cdots \mathrm{d} x_{m+k} \label{eq:ConvStep3} \\
=& \mbox{ RHS} \notag
\end{align}
\else
\begin{align}
\mbox{LHS} =& \sum_{n=0}^\infty\frac{1}{n!}\int\sum_{Y\subseteq X_n}f_1(Y)f_2(X_n-Y)v(X_n)\mathrm{d} x_1\cdots \mathrm{d} x_n \notag\\
=& \sum_{n=0}^\infty\frac{1}{n!}\sum_{I\subseteq I_n}\int f_1\left({\textstyle\bigcup_{i\in I}}\{x_i\}\right) \cdot \notag\\
& \cdot f_2\left({\textstyle\bigcup_{i\in I_n - I}}\{x_i\}\right) v(X_n)\mathrm{d} x_1\cdots \mathrm{d} x_n \label{eq:ConvStep1}\displaybreak[0]\\
=& \sum_{n=0}^\infty\sum_{m=0}^n\frac{1}{m!(n-m)!}\int f_1(\{x_1,\dots,x_m\}) \cdot \notag\\ 
& \cdot f_2(\{x_{m+1},\dots,x_n\})v(X_n)\mathrm{d} x_1\cdots \mathrm{d} x_n \label{eq:ConvStep2} \displaybreak[0]\\
=& \sum_{m=0}^\infty\sum_{k=0}^\infty\frac{1}{m!k!}\iint f_1(\{x_1,\dots,x_m\}) \cdot \notag\\
& \cdot f_2(\{x_{m+1},\dots,x_{m+k}\}) v(\{x_1,\dots,x_{m+k}\}) \cdot \notag\\ 
& \cdot \mathrm{d} x_1\cdots \mathrm{d} x_{m+k} \label{eq:ConvStep3} \\
=& \mbox{ RHS} \notag
\end{align}
\fi
where \eqref{eq:ConvStep1} replaces the sum of subsets of $X_n=\{x_1,\dots,x_n\}$ with a sum of the index subsets of the elements, and observes that the variable of summation is no longer a variable of integration, allowing the sum and integral to be exchanged; \eqref{eq:ConvStep2} observes that the integral \eqref{eq:ConvStep1} is the same for all $I$ with $|I|=m$, and that there are $\frac{n!}{m!(n-m)!}$ ways of choosing $I\subseteq I_n$ with $|I|=m$; and \eqref{eq:ConvStep3} makes a change of variables, defining $k=n-m$.

Subsequently, the desired result can be obtained by applying the two distribution case $(N-1)$ times for each $a\in{\cal A}$.
\end{proof}

\begin{corollary}\label{cor:SetMBDecomp}
Let $[X_i]\triangleq(X_1,\dots,X_N)$. Suppose that an alternative definition of a set-valued function $\tilde{v}$ satisfies $\tilde{v}([X_i])=v(X)$ for any $[X_i]$ such that $\biguplus_{i=1}^N{X_i}=X$. Then \eqref{eq:MBIntAlternative} can be equivalently evaluated as:
\vspace{-6pt}
\ifCLASSOPTIONdraftcls
\begin{equation}\label{eq:MBIntAlternative2}
\int{f(X)v(X)\delta X} =
\sum_{a\in{\cal A}}w_a \idotsint{ \prod_{i=1}^N f_{h_i}(X_i) } 
{\tilde{v}\big([X_i]\big)  \delta X_1\cdots \delta X_N} 
\end{equation}
\else
\begin{multline}\label{eq:MBIntAlternative2}
\int{f(X)v(X)\delta X} = \\
\sum_{a\in{\cal A}}w_a \idotsint{ \prod_{i=1}^N f_{h_i}(X_i) } 
{\tilde{v}\big([X_i]\big)  \delta X_1\cdots \delta X_N} 
\end{multline}
\fi
\end{corollary}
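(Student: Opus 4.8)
The plan is to start from Lemma~\ref{lem:MBIntegral}, which already establishes \eqref{eq:MBIntAlternative}, and then argue that replacing the integrand $v\!\left(\bigcup_{i=1}^N X_i\right)$ by $\tilde{v}\big([X_i]\big)$ alters the right-hand side only on a set of measure zero. The essential observation is that each factor $f_{h_i}(X_i)$ in the product has the Bernoulli form \eqref{eq:Bernoulli}, hence vanishes whenever $|X_i|>1$; consequently, when the set integrals $\delta X_1\cdots\delta X_N$ are expanded via \eqref{eq:SetIntegral}, only the terms in which each $X_i$ is either $\emptyset$ or a singleton $\{x_i\}$ survive, and each such term collapses to an ordinary Lebesgue integral over the locations $x_i$ of the nonempty components (indexed by some subset of $\{1,\dots,N\}$).

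Next I would isolate the ``diagonal'' configurations. Within each surviving term, the nonempty components contribute one free integration variable each, and the sets $X_1,\dots,X_N$ fail to be pairwise disjoint only when two of these single elements coincide, which is a finite union of lower-dimensional subsets of the integration domain, hence Lebesgue-null. On the complement of this null set one has $\bigcup_{i=1}^N X_i = \biguplus_{i=1}^N X_i = X$, so by the defining hypothesis on $\tilde v$ we get $\tilde v\big([X_i]\big) = v(X) = v\!\left(\bigcup_{i=1}^N X_i\right)$. Thus the two integrands agree almost everywhere with respect to the relevant integration measure, and the integrals in \eqref{eq:MBIntAlternative} and \eqref{eq:MBIntAlternative2} are equal.

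I expect the only slightly delicate point to be the bookkeeping required to phrase the measure-zero step cleanly: one must expand each $\delta X_i$ into its cardinality-$0$ and cardinality-$1$ pieces, record which subset of indices is nonempty, and note that the coincidence locus within each piece is lower-dimensional. Beyond that, the corollary is an immediate consequence of Lemma~\ref{lem:MBIntegral} rather than a new argument; indeed one could alternatively state it without explicit measure theory by observing that $\tilde v$ is, by construction, merely a re-indexed version of $v$ that is indifferent to which disjoint ``slot'' a given point occupies, but the almost-everywhere wording is the most self-contained.
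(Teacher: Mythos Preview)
Your proposal is correct and matches the paper's intent: the paper offers no explicit proof of this corollary, treating it as an immediate consequence of Lemma~\ref{lem:MBIntegral}, and your argument supplies precisely the justification that makes the substitution rigorous. Your measure-zero discussion (that coincidences among the singleton components occur on a Lebesgue-null diagonal, so $\bigcup_i X_i = \biguplus_i X_i$ almost everywhere under the Bernoulli factors) is more careful than anything the paper writes out, but it is exactly the point one would need to make if pressed, and nothing in your plan is extraneous or mistaken.
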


\begin{lemma}\label{lem:CardConst}
Suppose $\tilde{v}(X) \triangleq c(|X|) v(X)$, where $c(|X|)$ is an arbitrary function of the cardinality of $X$. Then
\vspace{-6pt}
\ifCLASSOPTIONdraftcls
\begin{equation}
\int{f(X)\log \tilde{v}(X)\delta X} = 
\sum_n f(n) \log c(n) + \int{f(X)\log v(X)\delta X}
\end{equation}
\else
\begin{multline}
\int{f(X)\log \tilde{v}(X)\delta X} = \\
\sum_n f(n) \log c(n) + \int{f(X)\log v(X)\delta X}
\end{multline}
\fi
where $f(n)$ is the cardinality distribution corresponding to $f(X)$.
\end{lemma}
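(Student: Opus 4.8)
The plan is to exploit the fact that $\log\tilde v(X) = \log c(|X|) + \log v(X)$, so that the set integral splits additively into two pieces. First I would write
\begin{align*}
\int{f(X)\log\tilde v(X)\,\delta X}
&= \int{f(X)\big[\log c(|X|) + \log v(X)\big]\,\delta X} \\
&= \int{f(X)\log c(|X|)\,\delta X} + \int{f(X)\log v(X)\,\delta X},
\end{align*}
using linearity of the set integral \eqref{eq:SetIntegral}. The second term is already the desired quantity, so the entire task reduces to showing that the first term equals $\sum_n f(n)\log c(n)$.

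For that term I would expand the set integral via its definition \eqref{eq:SetIntegral} and observe that the integrand $f(X)\log c(|X|)$ depends on $X$ only through $f$ itself and the cardinality $|X|$. Concretely,
\begin{align*}
\int{f(X)\log c(|X|)\,\delta X}
&= f(\emptyset)\log c(0) + \sum_{n=1}^\infty \frac{\log c(n)}{n!}\idotsint f(\{x_1,\dots,x_n\})\,\mathrm{d} x_1\cdots\mathrm{d} x_n \\
&= \sum_{n=0}^\infty \log c(n)\left[\frac{1}{n!}\idotsint f(\{x_1,\dots,x_n\})\,\mathrm{d} x_1\cdots\mathrm{d} x_n\right],
\end{align*}
and the bracketed quantity is exactly the cardinality distribution $f(n)$ of $f(X)$ (this is the standard identity that the probability of $n$ elements is obtained by integrating the RFS density over all $n$-tuples and dividing by $n!$, consistent with the permutation-invariant form of Definition~\ref{def:CompletePermutations} and the displayed relation between $f(X)$ and $p_f(n)$). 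Substituting gives $\sum_n f(n)\log c(n)$, completing the argument.

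I do not anticipate a serious obstacle here; the only point requiring a little care is justifying the interchange of the constant $\log c(n)$ with the integral and the summation, and recalling that $\frac{1}{n!}\idotsint f(\{x_1,\dots,x_n\})\,\mathrm{d} x_1\cdots\mathrm{d} x_n = f(n)$ is precisely the definition of the cardinality distribution. Both are immediate from \eqref{eq:SetIntegral}. If one wanted to be fully rigorous about convergence, one would note that $f$ is a normalised RFS density so $\sum_n f(n)=1$ and the manipulations are over an absolutely convergent series (assuming $\log c(n)$ grows slowly enough, or simply treating the identity formally as is standard in this literature).
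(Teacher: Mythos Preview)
Your proposal is correct and follows exactly the approach the paper indicates: the paper's proof consists of the single sentence that one ``simply separates the log of the product into the sum of logs and simplifies,'' and you have carried out precisely that separation together with the routine identification of $\frac{1}{n!}\idotsint f(\{x_1,\dots,x_n\})\,\mathrm{d} x_1\cdots\mathrm{d} x_n$ with the cardinality distribution $f(n)$.
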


The proof of lemma \ref{lem:CardConst} simply separates the log of the product into the sum of logs and simplifies.\footnote{Note that this result can be trivially extended to permit $c(\cdot)$ to be an arbitrary function of $X$.} With these results in hand, we are ready to prove the main theorem.

\begin{IEEEproof}[Proof of theorem \ref{th:BFMBEquivExact}]
Following application of lemma \ref{lem:MBIntegral}, the difference between \eqref{eq:ProbKLDiv}-\eqref{eq:MB} and \eqref{eq:MaxLikGTilde} is that the former considers assignment of the elements of $X$ to $g_j(X_j)$, whereas the latter considers assignment of the $N\geq|X|$ zero- or one-element subsets of $X$ to $g_j(X_j)$. Thus, if $n=|X|$, there will be $(N-n)$ empty subsets $X_i$ in any decomposition $\biguplus_{i=1}^N X_i=X$. These could be assigned to the remaining Bernoulli distributions (i.e., those that were not assigned to non-empty $X_i$) in $(N-n)!$ ways, so $\tilde{g}\left([X_i]\right)=c(|X|)g(X)$ where $c(n)=(N-n)!$. By corollary \ref{cor:SetMBDecomp} and lemma \ref{lem:CardConst}, the two objectives \eqref{eq:ProbKLDiv} and \eqref{eq:MaxLikGTilde} are different by a constant (WRT $[g_j]$), so that the solution(s) attaining the maxima are the same.
\end{IEEEproof}

\subsection{Simplification of zero-temperature objective}
\label{ss:BFMB_PT_Form}

\begin{IEEEproof}[Proof of theorem \ref{th:BFMB_PT_Form}]
To commence, consider the result obtained by substituting \eqref{eq:MStep} into \eqref{eq:EMStep5}. In this case, when $T=0$, the problem becomes:
\ifCLASSOPTIONdraftcls
\begin{align}
\tilde{J}^*&([q_a(\pi)]) = \min_{[g_j]}\tilde{J}_T([g_j],[q_a(\pi)]) \notag \\
=& -\sum_{a\in{\cal A},\pi\in\Pi_N} w_a q_a(\pi) \sum_{i=1}^N \int{f_{h_i}(X)\cdot } 
\log\left(\sum_{\tilde{a}\in{\cal A},\tilde{\pi}\in\Pi_N} w_{\tilde{a}} q_{\tilde{a}}(\tilde{\pi}) f_{h_{\tilde{\pi}^{-1}(\pi(i))}}(X)\right) \delta X \label{eq:Relax1} \\
=& -\sum_{a\in{\cal A},\pi\in\Pi_N} w_a q_a(\pi) \sum_{j=1}^N \int{f_{h_{\pi^{-1}(j)}}(X)\cdot } 
\log\left(\sum_{\tilde{a}\in{\cal A},\tilde{\pi}\in\Pi_N} w_{\tilde{a}} q_{\tilde{a}}(\tilde{\pi}) f_{h_{\tilde{\pi}^{-1}(j)}}(X)\right) \delta X \label{eq:Relax2} \\
=& -\sum_{j=1}^N\int{\left(
\sum_{a\in{\cal A},\pi\in\Pi_N} w_a q_a(\pi) f_{h_{\pi^{-1}(j)}}(X) 
\right)} 
\cdot \log \left(
\sum_{a\in{\cal A},\pi\in\Pi_N} w_a q_a(\pi) f_{h_{\pi^{-1}(j)}}(X) 
\right) \delta X \label{eq:Relax3} \\
=& -\sum_{j=1}^N\int{\left(
\sum_{h\in{\cal H}} q(h,j) f_{h}(X) 
\right)} 
\cdot \log \left(
\sum_{h\in{\cal H}} q(h,j) f_{h}(X) 
\right) \delta X \label{eq:Relax4}
\end{align}
\else
\begin{align}
\tilde{J}^*&([q_a(\pi)]) = \min_{[g_j]}\tilde{J}_T([g_j],[q_a(\pi)]) \notag \\
=& -\sum_{a\in{\cal A},\pi\in\Pi_N} w_a q_a(\pi) \sum_{i=1}^N \int{f_{h_i}(X)\cdot } \notag\\
& \quad \cdot \log\left(\sum_{\tilde{a}\in{\cal A},\tilde{\pi}\in\Pi_N} w_{\tilde{a}} q_{\tilde{a}}(\tilde{\pi}) f_{h_{\tilde{\pi}^{-1}(\pi(i))}}(X)\right) \delta X \label{eq:Relax1}\displaybreak[0] \\
=& -\sum_{a\in{\cal A},\pi\in\Pi_N} w_a q_a(\pi) \sum_{j=1}^N \int{f_{h_{\pi^{-1}(j)}}(X)\cdot } \notag\\
& \quad \cdot \log\left(\sum_{\tilde{a}\in{\cal A},\tilde{\pi}\in\Pi_N} w_{\tilde{a}} q_{\tilde{a}}(\tilde{\pi}) f_{h_{\tilde{\pi}^{-1}(j)}}(X)\right) \delta X \label{eq:Relax2}\displaybreak[0] \\
=& -\sum_{j=1}^N\int{\left(
\sum_{a\in{\cal A},\pi\in\Pi_N} w_a q_a(\pi) f_{h_{\pi^{-1}(j)}}(X) 
\right)} \cdot \notag \\
& \quad \cdot \log \left(
\sum_{a\in{\cal A},\pi\in\Pi_N} w_a q_a(\pi) f_{h_{\pi^{-1}(j)}}(X) 
\right) \delta X \label{eq:Relax3}\displaybreak[0] \\
=& -\sum_{j=1}^N\int{\left(
\sum_{h\in{\cal H}} q(h,j) f_{h}(X) 
\right)} \cdot \notag \\
& \quad \cdot \log \left(
\sum_{h\in{\cal H}} q(h,j) f_{h}(X) 
\right) \delta X \label{eq:Relax4}
\end{align}
\fi
where
\[
q(h,j) \triangleq
\sum_{i=1}^N \left(\sum_{a=(h_1,\dots,h_N)\in{\cal A}|h_i=h} w_a \sum_{\pi\in\Pi_N|\pi(i)=j} q_a(\pi)\right)
\]
The step \eqref{eq:Relax2} changes the variable of summation from $i$ to $j=\pi(i)$ (noting that $\pi$ is a bijection).
\end{IEEEproof}

\section{Relationship to KLSJPDA}
\label{sec:SJPDAEquiv}
This appendix explores the similarities between BFMB and KLSJPDA, developing a variant of BFMB that is specialised to the vector case, and showing that it is equivalent to KLSJPDA. Throughout the appendix we assume that the number of objects $n$ is known, and therefore RFS distributions of fixed cardinality are handled as symmetrised vector distributions. We denote the state of target $i$ as $x_i$, and the joint state of all $n$ targets as $\boldsymbol{X}=(x_1,\dots,x_n)$.

\subsection{KLSJPDA}
\label{ss:KLSJPDAEquiv}
As described in section \ref{ss:KLSJPDA}, KLSJPDA \cite{SveSve11} seeks to find the ordered distribution $\tilde{f}(\boldsymbol{X})$ in the same unordered family as the original ordered distribution $f(\boldsymbol{X})$ that is best able to be approximated via a Gaussian $g(\boldsymbol{X})$, and the parameters of that Gaussian distribution, $\boldsymbol{\mu}$ and $\boldsymbol{\Sigma}$. As an optimisation, this can be written as:
\begin{align}
\operatornamewithlimits{minimise~}_{\tilde{f}(\boldsymbol{X}),g(\boldsymbol{X})} & \int{\tilde{f}(\boldsymbol{X})\log\frac{\tilde{f}(\boldsymbol{X})}{g(\boldsymbol{X})}d\boldsymbol{X}} \label{eq:KLSJPDA1} \\
\st & \sum_{\pi}\tilde{f}(\pi \boldsymbol{X}) = \sum_{\pi} f(\pi \boldsymbol{X}) \notag \\
& g(\boldsymbol{X}) = \mathcal{N}\{\boldsymbol{X};\boldsymbol{\mu},\boldsymbol{\Sigma}\} \notag
\end{align}
where the sum over $\pi$ represents all $n!$ permutation matrices for the $n$ single-target components of $\boldsymbol{X}$ (and hence $\pi \boldsymbol{X}$ permutes the targets within $\boldsymbol{X}$ according to the permutation matrix $\pi$). As discussed in \cite{SveSve11}, the minimisation with respect to $g(\boldsymbol{X})$ (or, more specifically $\boldsymbol{\mu}$ and $\boldsymbol{\Sigma}$) simply moment matches the distribution to the mean and covariance of $\tilde{f}(\boldsymbol{X})$. It is proven in \cite{SveSve10,GarVo14} that the minimisation with respect to $\tilde{f}(\boldsymbol{X})$ yields:\footnote{The two target case is shown in \cite{SveSve10}; the extension to multiple target problems is shown in \cite{GarVo14}.}
\begin{equation}\label{eq:KLSJPDA_f_solution}
\tilde{f}(\boldsymbol{X}) = \frac{g(\boldsymbol{X})}{\sum_{\pi} g(\pi\boldsymbol{X})} \cdot \sum_{\pi} f(\pi\boldsymbol{X})
\end{equation}

\subsection{BFMB}
\label{ss:VecBFMB}
To highlight the similarities of the methods, we develop a variant of the BFMB filter specialised to the vector case (i.e., fixed cardinality), and permitting the permutation (the missing information in EM) to vary with the target state $\boldsymbol{X}$ (rather than constraining it to vary only with the global association hypothesis $a$). The problem we seek to solve is:
\begin{align}
\operatornamewithlimits{minimise~}_{g(\boldsymbol{X})} & \frac{1}{n!}\int{\left[\sum_{\pi}f(\pi\boldsymbol{X})\right]\log\frac{\sum_\pi f(\pi\boldsymbol{X})}{\sum_\pi g(\pi\boldsymbol{X})}d\boldsymbol{X}} \label{eq:BFMB_Vec} \\
\st & g(\boldsymbol{X}) = \mathcal{N}\{\boldsymbol{X};\boldsymbol{\mu},\boldsymbol{\Sigma}\} \notag
\end{align}
We develop a variant of BFMB to exactly solve this case (i.e., find a local minimum) in appendix \ref{ss:VecBFMBSol}. First, we prove that the optimisation in KLSJPDA \eqref{eq:KLSJPDA1} is equivalent to \eqref{eq:BFMB_Vec}.

\subsection{Proof of equivalence}
\label{ss:EquivProof}
\begin{theorem}
Let $J(g(\boldsymbol{X}))$ be the partial minimisation of \eqref{eq:KLSJPDA1} over $\tilde{f}(\boldsymbol{X})$:
\begin{align}
J(g(\boldsymbol{X})) \triangleq \min_{\tilde{f}(\boldsymbol{X})} & \int{\tilde{f}(\boldsymbol{X})\log\frac{\tilde{f}(\boldsymbol{X})}{g(\boldsymbol{X})}d\boldsymbol{X}} \\
\st & \sum_{\pi}\tilde{f}(\pi \boldsymbol{X}) = \sum_{\pi} f(\pi \boldsymbol{X}) \notag 
\end{align}
Then $J(g(\boldsymbol{X}))$ is exactly the objective of \eqref{eq:BFMB_Vec}. Consequently the two optimisation problems \eqref{eq:KLSJPDA1} and \eqref{eq:BFMB_Vec} are equivalent. 
\end{theorem}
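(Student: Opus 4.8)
The plan is to substitute the closed form of the constrained minimizer $\tilde{f}(\boldsymbol{X})$ — equation \eqref{eq:KLSJPDA_f_solution}, which is quoted from \cite{SveSve10,GarVo14} — into the inner objective, then simplify using only the permutation invariance of the symmetrized densities. Introduce the shorthand $S_f(\boldsymbol{X}) \triangleq \sum_{\pi} f(\pi\boldsymbol{X})$ and $S_g(\boldsymbol{X}) \triangleq \sum_{\pi} g(\pi\boldsymbol{X})$; both are symmetric in $\boldsymbol{X}$, and $S_g(\boldsymbol{X})>0$ since $g$ is Gaussian. By \eqref{eq:KLSJPDA_f_solution} the constrained minimizer is $\tilde{f}(\boldsymbol{X}) = g(\boldsymbol{X})\,S_f(\boldsymbol{X})/S_g(\boldsymbol{X})$, so that $\tilde{f}(\boldsymbol{X})/g(\boldsymbol{X}) = S_f(\boldsymbol{X})/S_g(\boldsymbol{X})$ and the $g(\boldsymbol{X})$ inside the logarithm cancels, leaving
\[
J(g(\boldsymbol{X})) = \int{ \frac{g(\boldsymbol{X})}{S_g(\boldsymbol{X})}\, S_f(\boldsymbol{X}) \log \frac{S_f(\boldsymbol{X})}{S_g(\boldsymbol{X})}\, d\boldsymbol{X}}.
\]

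The second step is to symmetrize the only non-symmetric factor, $g(\boldsymbol{X})/S_g(\boldsymbol{X})$. Writing $h(\boldsymbol{X}) \triangleq S_f(\boldsymbol{X}) \log[S_f(\boldsymbol{X})/S_g(\boldsymbol{X})]$, which is permutation-invariant, the measure-preserving change of variables $\boldsymbol{X}=\pi\boldsymbol{Y}$ together with $S_g(\pi\boldsymbol{Y})=S_g(\boldsymbol{Y})$ and $h(\pi\boldsymbol{Y})=h(\boldsymbol{Y})$ shows that $\int g(\pi\boldsymbol{Y})\,h(\boldsymbol{Y})/S_g(\boldsymbol{Y})\, d\boldsymbol{Y}$ takes the same value for every permutation $\pi$. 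Averaging this identity over the $n!$ permutations and exchanging the finite sum with the integral replaces $g(\boldsymbol{X})$ by $\tfrac{1}{n!}\sum_{\pi} g(\pi\boldsymbol{X}) = \tfrac{1}{n!}S_g(\boldsymbol{X})$, which cancels $S_g(\boldsymbol{X})$ in the denominator and leaves the factor $\tfrac{1}{n!}$:
\[
J(g(\boldsymbol{X})) = \frac{1}{n!}\int{ S_f(\boldsymbol{X}) \log \frac{S_f(\boldsymbol{X})}{S_g(\boldsymbol{X})}\, d\boldsymbol{X}} = \frac{1}{n!}\int{ \left[\sum_{\pi}f(\pi\boldsymbol{X})\right]\log\frac{\sum_{\pi}f(\pi\boldsymbol{X})}{\sum_{\pi}g(\pi\boldsymbol{X})}\, d\boldsymbol{X}},
\]
which is exactly the objective of \eqref{eq:BFMB_Vec}.

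Finally, since this identity holds for every admissible $g$, and the feasible set — Gaussian distributions $\mathcal{N}\{\boldsymbol{X};\boldsymbol{\mu},\boldsymbol{\Sigma}\}$ — is the same in \eqref{eq:KLSJPDA1} and \eqref{eq:BFMB_Vec}, minimizing $J(g(\boldsymbol{X}))$ over $g$ coincides with \eqref{eq:BFMB_Vec}; and by definition $\min_g J(g(\boldsymbol{X}))$ equals the joint minimum of \eqref{eq:KLSJPDA1} over $(\tilde{f},g)$, with the KLSJPDA minimizer $\tilde{f}$ recovered from the optimal $g$ via \eqref{eq:KLSJPDA_f_solution}. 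Hence the two problems are equivalent. I expect the only delicate point to be the symmetrization bookkeeping in the second step — applying the change of variables consistently and justifying the exchange of the finite permutation sum with the integral — which is routine; the rest is algebraic cancellation made available by \eqref{eq:KLSJPDA_f_solution}.
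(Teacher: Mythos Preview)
Your proposal is correct and follows essentially the same approach as the paper: substitute the closed-form minimizer \eqref{eq:KLSJPDA_f_solution}, observe that the log-ratio factor is permutation-invariant, and use a measure-preserving change of variables to symmetrize the remaining non-symmetric factor. The only cosmetic difference is that the paper first uses the change of variables to reduce $S_f$ to the unsymmetrized $f$ and then reintroduces the $\tfrac{1}{n!}$ sum in a second step, whereas you keep $S_f$ intact and average $g$ over permutations in a single step; the underlying manipulation is the same.
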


\begin{proof}
To begin, we substitute the solution of \eqref{eq:KLSJPDA1} with respect to $\tilde{f}(\boldsymbol{X})$ (i.e., \eqref{eq:KLSJPDA_f_solution}) into \eqref{eq:KLSJPDA1} to obtain:
\begin{align}
J(g(\boldsymbol{X})) &= \int \left[\frac{g(\boldsymbol{X})}{\sum_{\tilde{\pi}} g(\tilde{\pi}\boldsymbol{X})}  \sum_{\pi} f(\pi\boldsymbol{X})\right] \log\frac{\sum_{\tilde{\pi}} f(\tilde{\pi}\boldsymbol{X})}{\sum_{\tilde{\pi}} g(\tilde{\pi}\boldsymbol{X})} d\boldsymbol{X}
\intertext{Let $s(\boldsymbol{X}) = \log\frac{\sum_{\tilde{\pi}} f(\tilde{\pi}\boldsymbol{X})}{\sum_{\tilde{\pi}} g(\tilde{\pi}\boldsymbol{X})}$ and $t(\boldsymbol{X})=\sum_{\tilde{\pi}}g(\tilde{\pi}\boldsymbol{X})$ and note that $s(\boldsymbol{X})$ and $t(\boldsymbol{X})$ are symmetric, i.e., $s(\boldsymbol{X})= s(\pi\boldsymbol{X})\;\forall\;\pi$. Substituting these functions and changing the variable of integration to $\boldsymbol{Y}=\pi\boldsymbol{X}$:}
&= \sum_{\pi}\int \left[\frac{g(\pi^{-1}\boldsymbol{Y})}{t(\pi^{-1}\boldsymbol{Y})} f(\boldsymbol{Y})\right] s(\pi^{-1}\boldsymbol{Y}) d\boldsymbol{Y} \\
&= \int \left[\frac{\sum_{\pi}g(\pi^{-1}\boldsymbol{Y})}{t(\boldsymbol{Y})}  f(\boldsymbol{Y})\right] s(\boldsymbol{Y}) d\boldsymbol{Y} \\
&= \int f(\boldsymbol{Y})s(\boldsymbol{Y}) d\boldsymbol{Y} 
\intertext{where the final step exploits the fact that the inverse of a permutation is also a permutation, so $\sum_{\pi}g(\pi^{-1}\boldsymbol{Y})$ is just a reordering of terms in the sum $\sum_{\tilde{\pi}}g(\tilde{\pi}\boldsymbol{Y})=t(\boldsymbol{Y})$. Introducing a sum over permutations which evaluates to $1$ and changing variables back to $\boldsymbol{X}=\pi^{-1}\boldsymbol{Y}$:}
&= \left[\sum_{\pi}\frac{1}{n!}\right]\int f(\boldsymbol{Y})s(\boldsymbol{Y}) d\boldsymbol{Y}  \\
&= \frac{1}{n!}\sum_{\pi}\int f(\pi\boldsymbol{X})s(\pi\boldsymbol{X}) d\boldsymbol{X} \\
&= \frac{1}{n!}\int\left[\sum_{\pi}f(\pi\boldsymbol{X})\right] s(\boldsymbol{X}) d\boldsymbol{X}
\end{align}
This is the desired result.
\end{proof}

\subsection{Exact solution of BFMB using EM}
\label{ss:VecBFMBSol}
Now we show how a procedure analogous to section \ref{ss:VMB} can be used to exactly solve \eqref{eq:BFMB_Vec} (i.e., find a local minimum). This in turn provides insight into the approximations made in section \ref{ss:VMB}. Commencing from \eqref{eq:BFMB_Vec}, separating the log of the quotient into the difference of logs, and dropping the first term since it does not depend on $g(\boldsymbol{X})$, we seek to solve the problem
\begin{align}
\operatornamewithlimits{minimise~}_{g(\boldsymbol{X})} & \frac{-1}{n!}\sum_{\pi}\int{f(\pi\boldsymbol{X})\log\left[\sum_{\tilde{\pi}} g(\tilde{\pi}\boldsymbol{X})\right]d\boldsymbol{X}} \label{eq:BFMB_Vec1} \\
\st & g(\boldsymbol{X}) = \mathcal{N}\{\boldsymbol{X};\boldsymbol{\mu},\boldsymbol{\Sigma}\} \notag
\end{align}
Changing variables to $\boldsymbol{Y}=\pi\boldsymbol{X}$ and observing that the argument of the log is symmetric ($=t(\boldsymbol{X})$ as defined above) and that the integral is constant with respect to the sum over $\pi$, we arrive at the equivalent problem:
%
%
\begin{align}
\operatornamewithlimits{minimise~}_{g(\boldsymbol{Y})} & -\int{f(\boldsymbol{Y})\log\left[\sum_{\tilde{\pi}} g(\tilde{\pi}\boldsymbol{Y})\right]d\boldsymbol{Y}} \label{eq:BFMB_Vec2} \\
\st & g(\boldsymbol{Y}) = \mathcal{N}\{\boldsymbol{Y};\boldsymbol{\mu},\boldsymbol{\Sigma}\} \notag
\end{align}
We define the objective of this problem to be $\bar{J}(g(\boldsymbol{X}))$. Following the procedure in sections \ref{ss:BackEM} and \ref{ss:VMB}, we introduce missing data $q_{\boldsymbol{X}}(\pi)$ to describe the mapping between the targets in $f(\boldsymbol{X})$ and those in $g(\boldsymbol{X})$ for each value of $\boldsymbol{X}$ using EM. This occurs by introducing the sum over $q_{\boldsymbol{X}}(\pi)$ (which evaluates to $1\;\forall\;\boldsymbol{X}$), and then applying the log-sum inequality:
\begin{align}
\bar{J}(g(\boldsymbol{X})) &= \int{\left[\sum_{\pi}q_{\boldsymbol{X}}(\pi)\right]f(\boldsymbol{X})\log\frac{\sum_{\pi}q_{\boldsymbol{X}}(\pi)}{\sum_{\pi} g(\pi\boldsymbol{X})}d\boldsymbol{X}} \label{eq:BFMB_Vec3}\\
&\leq \sum_{\pi}\int{q_{\boldsymbol{X}}(\pi)f(\boldsymbol{X})\log\frac{q_{\boldsymbol{X}}(\pi)}{g(\pi\boldsymbol{X})}d\boldsymbol{X}}  \label{eq:BFMB_Vec4}
\end{align}
The log-sum inequality is tight if the ratio $q_{\boldsymbol{X}}(\pi)/g(\pi\boldsymbol{X})$ is constant (WRT $\pi$) \cite[p29]{CovTho91}. Solving for $q_{\boldsymbol{X}}(\pi)$ we obtain:
\begin{equation}\label{eq:BFMBVec_min_q}
q_{\boldsymbol{X}}(\pi) = \frac{g(\pi\boldsymbol{X})}{\sum_{\tilde{\pi}}g(\tilde{\pi}\boldsymbol{X})}
\end{equation}
Substituting this into \eqref{eq:BFMB_Vec4}, we find that the ratio $q_{\boldsymbol{X}}(\pi)/g(\pi\boldsymbol{X})=1/\sum_{\tilde{\pi}}g(\tilde{\pi}\boldsymbol{X})$ which is constant WRT $\pi$, hence the inequality is tight at the minimum WRT $q_{\boldsymbol{X}}(\pi)$, thus it is tight at convergence (as discussed in section \ref{ss:BackEM}).

Finally, we make a change of variable in the upper bound \eqref{eq:BFMB_Vec4} to $\boldsymbol{Y}=\pi\boldsymbol{X}$ to obtain an equivalent expression
\begin{align}
\bar{J}(g(\boldsymbol{X})) &\leq \sum_{\pi}\int{q_{\pi^{-1}\boldsymbol{Y}}(\pi)f(\pi^{-1}\boldsymbol{Y})\log\frac{q_{\pi^{-1}\boldsymbol{Y}}(\pi)}{g(\boldsymbol{Y})}d\boldsymbol{Y}}  \label{eq:BFMB_Vec5}
\end{align}
Minimising this expression WRT $g(\boldsymbol{X})$, we find that we need to match $\boldsymbol{\mu}$ and $\boldsymbol{\Sigma}$ to the mean and covariance of 
\begin{align}
\tilde{f}(\boldsymbol{Y}) &= \sum_{\pi}q_{\pi^{-1}\boldsymbol{Y}}(\pi)f(\pi^{-1}\boldsymbol{Y}) \label{eq:BFMBVec_Reordered}
\intertext{Substituting the minimum $q_{\boldsymbol{X}}(\pi)$, \eqref{eq:BFMBVec_min_q}, into this expression, we obtain the equivalent form}
&= \sum_{\pi}\frac{g(\pi\pi^{-1}\boldsymbol{Y})}{\sum_{\tilde{\pi}}g(\tilde{\pi}\pi^{-1}\boldsymbol{Y})}f(\pi^{-1}\boldsymbol{Y})
\intertext{Again, since the denominator is $t(\pi^{-1}\boldsymbol{Y})$ as defined above (which is symmetric), and the sum over the inverse of all permutations is just a reordering of terms in the sum over all permutations,}
&= \sum_{\pi}\frac{g(\boldsymbol{Y})}{\sum_{\tilde{\pi}}g(\tilde{\pi}\boldsymbol{Y})}f(\pi\boldsymbol{Y})
\end{align}
which is identical to \eqref{eq:KLSJPDA_f_solution}. 

\subsection{Summary}
To summarise, we have shown that:
\begin{itemize}
\item The optimisation used by the vector version of BFMB \eqref{eq:BFMB_Vec} is equivalent to the optimisation used by KLSJPDA \eqref{eq:KLSJPDA1}.
\item The method of solution for vector BFMB yields identical iterates for $g(\boldsymbol{X})$ and $\tilde{f}(\boldsymbol{X})$ (defined through the missing data $q_{\boldsymbol{X}}(\pi)$ in \eqref{eq:BFMBVec_Reordered}) as KLSJPDA.
\item The estimation of missing data in vector BFMB is equivalent to the selection of an ordered density from the same unordered family in KLSJPDA.
\end{itemize}
The differences between the vector BFMB developed in this appendix and the version in section \ref{sec:BFMB} are as follows:
\begin{itemize}
\item Whereas the version in this appendix considers the vector case (where the number of targets is known), the version in section \ref{sec:BFMB} is formulated through RFS to accommodate uncertainty in the number of objects present.
\item Since it is intractable to estimate missing data for every joint target state $\boldsymbol{X}$, in section \ref{sec:BFMB}, the missing data distribution is constrained to depend only on the global association hypothesis $a$ (replacing $q_{\boldsymbol{X}}(\pi)$ with $q_a(\pi)$). The consequence of this is that the upper bound is not necessarily tight at the optimal value of $q_a(\pi)$, but the approximation is essential for practical tractability. A similar constraint is applied in SJPDA (i.e., reordering target indices for each hypothesis, not for each target state).
\item Having constrained the missing data, it is found to be necessary to de-weight the term involving the entropy of the missing data. While this loosens the upper bound, it is shown experimentally that the minimum that it attains is closer to the minimum of the original objective. Setting $T=0$ leads to the point-estimate variant of EM \cite{GupChe10}, of which the widely-used $k$-means algorithm is an instance. In contrast, SJPDA replaces the objective based on KL divergence with the trace of the covariance.
\end{itemize}

\section*{Acknowledgements}
{\noindent}The author thanks Profs Ba-Ngu Vo and Ba-Tuong Vo for helpful discussions during the early stages of this work, and the anonymous reviewers for suggestions that helped to clarify many points. 

\ifCLASSOPTIONdraftcls
\else
\IEEEtriggercmd{\enlargethispage{-2in}}
\IEEEtriggeratref{10}
\fi

{\small{\bibliographystyle{IEEEtran}
\bibliography{IEEEabrv,../Bibliography}}}

\begin{IEEEbiography}[{\includegraphics[width=1in,height=1.25in,clip,keepaspectratio]{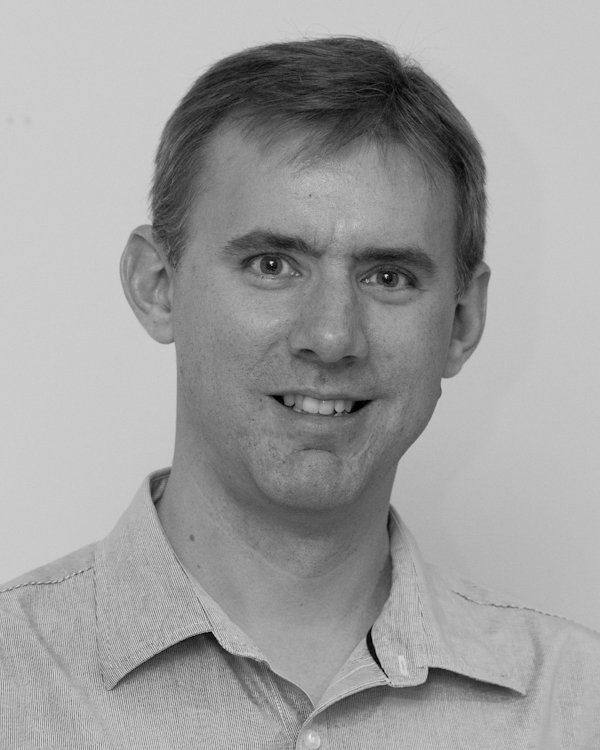}}]{Jason L.\ Williams} (S'01--M'07) received degrees of BE(Electronics)/BInfTech from Queensland University of Technology in 1999, MSEE from the United States Air Force Institute of Technology in 2003, and PhD from Massachusetts Institute of Technology in 2007. 

He worked for several years as an engineering officer in the Royal Australian Air Force, before joining Australia's Defence Science and Technology Organisation in 2007. He is also an adjunct senior lecturer at the University of Adelaide. His research interests include target tracking, sensor resource management, Markov random fields and convex optimisation. 
\end{IEEEbiography}

\ifCLASSOPTIONdraftcls
\else
\vfill
\fi

\end{document}